\def\R{{\mathds R}}
\def\C{{\mathds C}}
\newcommand{\be}{\begin{equation}}
\newcommand{\ee}{\end{equation}}
\newcommand{\etr}[1]{{\mathrm{etr}}\left\{#1\right\}}
\newcommand{\bzero}{{\mbox{\boldmath $0$}}}
\newcommand{\bI}{{\mbox{\boldmath $I$}}}
\newcommand{\bz}{{\mbox{\boldmath $z$}}}
\newcommand{\bx}{{\mbox{\boldmath $x$}}}
\newcommand{\bom}{{\mbox{\boldmath $m$}}}
\newcommand{\bbeta}{{\mbox{\boldmath $\beta$}}}
\newcommand{\bB}{{\mbox{\boldmath $B$}}}
\newcommand{\bA}{{\mbox{\boldmath $A$}}}
\newcommand{\bC}{{\mbox{\boldmath $C$}}}
\newcommand{\bG}{{\mbox{\boldmath $G$}}}
\newcommand{\bZ}{{\mbox{\boldmath $Z$}}}
\newcommand{\bR}{{\mbox{\boldmath $R$}}}
\newcommand{\bV}{{\mbox{\boldmath $V$}}}
\newcommand{\bE}{{\mbox{\boldmath $E$}}}
\newcommand{\bX}{{\mbox{\boldmath $X$}}}
\newcommand{\bS}{{\mbox{\boldmath $S$}}}
\newcommand{\bQ}{{\mbox{\boldmath $Q$}}}
\newcommand{\bP}{{\mbox{\boldmath $P$}}}
\newcommand{\bW}{{\mbox{\boldmath $W$}}}
\newcommand{\bH}{{\mbox{\boldmath $H$}}}
\newcommand{\tr}{\mbox{\rm Tr}\, }
\newcommand{\bGamma}{\mbox{\boldmath{$\Gamma$}}}
\newcommand{\bLambda}{\mbox{\boldmath{$\Lambda$}}}
\newcommand{\bSigma}{\mbox{\boldmath{$\Sigma$}}}
\newcommand{\bU}{{\mbox{\boldmath $U$}}}
\newcommand{\bM}{{\mbox{\boldmath $M$}}}
\newcommand{\diag}{\mbox{diag}\, }
\newcommand{\bK}{{\mbox{\boldmath $K$}}}
\newcommand{\bT}{{\mbox{\boldmath $T$}}}
\newcommand{\bD}{{\mbox{\boldmath $D$}}}
\newcommand{\dmax}{\begin{displaystyle}\max\end{displaystyle}}
\newcommand{\dmin}{\begin{displaystyle}\min\end{displaystyle}}
\newtheorem{theorem}{Theorem}
\newtheorem{lemma}{Lemma}
\newtheorem{corollary}{Corollary}
\newcommand{\test}{\mbox{$
\begin{array}{c}
\stackrel{ \stackrel{\textstyle H_1}{\textstyle >} }{ 
\stackrel{\textstyle <}{ \textstyle  H_0} }

\end{array}
$}}
\def\cC{\mbox{$\CMcal C$}}
\def\cN{\mbox{$\CMcal N$}}
\begin{document}

\title{A Unified Theory of Adaptive Subspace Detection. Part I: Detector Designs.}

\author{Danilo Orlando, \IEEEmembership{Senior Member, IEEE},  Giuseppe Ricci$^{*}$, \IEEEmembership{Senior Member, IEEE}, and Louis L. Scharf, \IEEEmembership{Life Fellow Member, IEEE}
\thanks{Danilo Orlando is with Engineering Faculty of Università degli Studi “Niccolò Cusano”, via
Don Carlo Gnocchi, 3, 00166 Roma, Italy. E-mail: {\tt danilo.orlando@unicusano.it}.}
\thanks{Giuseppe Ricci is with the Dipartimento di Ingegneria dell'Innovazione,
Universit\`{a} del Salento, Via Monteroni, 73100 Lecce, Italy.
E-Mail: {\tt giuseppe.ricci@unisalento.it}.}
\thanks{Louis L. Scharf is with the Departments of Mathematics and Statistics, Colorado
State University, Fort Collins, CO, USA
E-Mail: {\tt scharf@colostate.edu}. His work is supported by the US Office of Naval Research under contract N00014-21-1-2145, and by the US Airforce Office of Scientific Research under contract AF 9550-18-1-0087.}
\thanks{$^{*}$Corresponding author}         
}

\maketitle

\begin{abstract}
This paper addresses the problem of detecting multidimensional subspace signals, which model range-spread targets, in noise of unknown covariance. It is assumed that a primary 
channel of measurements, possibly consisting of signal plus noise, is augmented with a secondary channel of measurements 
containing only noise. The noises in these two channels share  a common covariance matrix, up to a scale, which may be known or unknown. 
The signal model is a subspace model with variations: the subspace may be known or known only by its dimension; consecutive visits to the subspace may 
be unconstrained or they may be constrained by a prior distribution. As a consequence, there are four general classes of detectors 
and, within each class, there is a detector for the case where the scale between the primary and secondary channels is known, and for the case where 
this scale is unknown.
The generalized likelihood ratio (GLR) based detectors derived in this paper, when organized with 
previously published GLR detectors, comprise a unified theory of adaptive subspace detection
from primary and secondary channels of measurements. 
\end{abstract}

\begin{IEEEkeywords}
Adaptive Detection, Subspace Model,  Generalized Likelihood Ratio Test, Alternating Optimization, 
Extended Targets, 
Homogeneous Environment, Partially-Homogeneous Environment, Radar, Sonar.
\end{IEEEkeywords}

\section{Introduction}

The general problem of matched and adaptive subspace detection of point-like targets in Gaussian and non-Gaussian disturbance has been addressed by  
many authors, beginning with the seminal work of Kelly and Forsythe \cite{Kelly86}, \cite{Kelly-Forsythe}. 
The  innovation of \cite{Kelly86} was to introduce a {\em homogeneous} secondary channel of signal-free measurements whose unknown covariance matrix was equal to the unknown covariance matrix of primary (or test) measurements. Likelihood theory was then used to derive what is now called the Kelly detector. In \cite{Kelly-Forsythe}, the adaptive subspace detection was formulated in terms of the so-called
generalized multivariate analysis of variance for complex variables.
These papers were followed by the important 
adaptive detectors of \cite{Chen-Reed}, \cite{Robey}. Then in 1995 and 1996, a scale-invariant {\em adaptive subspace detector}, now commonly called 
ACE (adaptive coherence estimator), was introduced. In \cite {CLR1995} this detector was derived as an asymptotic approximation to the generalized 
likelihood ratio (GLR) to detect a coherent signal in compound-Gaussian noise with known spectral properties, and in \cite{Scharf-McWhorter} it 
was derived as an {\em estimate and 
plug} version of the scale-invariant {\em matched subspace detector} \cite{Scharf-book},  \cite{Scharf-Friedlander1994}. 
Interestingly, in \cite {Kraut-Scharf1999} the authors showed that ACE was a likelihood ratio detector for a {\em non-homogeneous} secondary channel of measurements whose unknown covariance matrix was a scaled version of the unknown covariance matrix of the primary channel.  The scale was unknown. Then, in \cite{Kraut-Scharf-McWhorter} it was shown that ACE is a uniformly most powerful invariant (UMPI) detector. In subsequent years there has been a flood of important papers. Among published references on adaptive detection we cite here 
\cite[and references therein]{Bose-Steinhardt, Gerlach1997, CDMR2001, Gini-Farina2002,BDMGR,BBORS2007,CDMO2016_1,CDMO2016_2,CFR2020}. 
All of this work is addressed to adaptive detection in what might be called a first-order model for measurements. That is, the measurements under test may contain a signal in a known subspace embedded in Gaussian noise of unknown covariance, but no prior distribution is assigned to the location of the signal in the subspace. 
In particular, in \cite{CDMR2001,BDMGR} the authors extend adaptive subspace detection to range-spread targets deriving likelihood ratio detectors that were then compared to estimate and plug adaptations.  
The first attempt to replace this model by a second-order model was made in \cite{Ricci-Scharf}, where the authors used a Gaussian model for the signal. The covariance matrix for the signal was constrained by a known subspace model. The resulting {\em second-order matched subspace detector} was derived \cite{Ricci-Scharf}, and an estimate and plug adaptation from secondary measurements was proposed. 

The aim of the current paper is to extend the results of \cite{CDMR2001,Ricci-Scharf,BDMGR,BCCRV} to include all variations on adaptive subspace detection in first- and second-order models for a subspace signal to be detected. These models include 
signals that lie in a known subspace or in an unknown subspace of known dimension. They will be clarified in due course.

Our results are motivated by the problem of 
detecting range-spread targets from an active radar system. 
As a matter of fact, targets may  
be resolved into a number of scattering centers
depending on the range extent of the target, the range resolution  
of the radar, 
and its operating frequency.
Measurements indicate that radar properties of several targets  
may be modeled in terms of a set of scattering 
centers each parameterized by its range, amplitude, and, possibly, polarization elipse \cite{Steedly-Moses}.
However, our 
framework and corresponding 
results are actually much more generally applicable, as they apply to sonar, communications, hyperspectral imaging, and to any problem where a signal to be detected lies in a known subspace, or in a subspace known only by its dimension. 
In all of these applications, measurements in a primary channel may contain signal  plus Gaussian noise. Measurements in a secondary channel contain only noise. The noises in the two channels are independent, but they share a common covariance matrix, at least to within 
an unknown scale. The case of a common covariance matrix in the two channels is typically referred to as a case of {\em homogeneous environment}, 
while the more general case of an unknown scale factor is commonly referred to as a case of {\em partially-homogeneous environment}.
As for the signal components, they are determined by a visit to a subspace. According to a {\em first-order model} for these visits, 
there is no constraint on their location in the subspace; as a consequence the subspace signal model modulates the mean of a multivariate Gaussian 
distribution. According to a {\em second-order model}, the location in the subspace is ruled by a prior distribution, which is taken to be a Gaussian distribution; as a consequence the subspace signal model modulates the covariance matrix  of a multivariate Gaussian distribution. For each of 
these variations on the problem of adaptively detecting a subspace signal, we derive a 
detector based upon the GLR, or generalized likelihood ratio test (GLRT)
(for the definition of GLRT see \cite{VanTrees-book1}).
Recall that the GLRT compares a GLR statistic to a threshold $\eta$,
set according to the desired probability of false alarm ($P_{fa}$), to discriminate between the noise-only hypothesis ($H_0$)
and the signal-plus-noise hypothesis ($H_1$). Hereafter, $\eta$ will denote any modification of the original threshold.
Taken together, our results comprise a unified theory of adaptive subspace detection.

\subsection{A preview of the paper}
Before proceeding with the derivations, we summarize below the different variations on a multidimensional subspace signal model
addressed in this paper:
\begin{itemize}
\item 
The signal visits a  {\em known subspace, unconstrained by a prior distribution.}  We call this a {\em first-order model}, as the signal appears as a {\em low-rank component} in the mean of a multivariate Gaussian distribution for the measurements.  
When there is only one  measurement in the primary channel, 
then the GLRTs are those of \cite{Kelly86,CLR1995,Kraut-Scharf1999}. For multiple measurements these results are extended in \cite{CDMR2001,BDMGR}. 
Herein, we assume the signal belongs to a subspace as in \cite{BDMGR}, but we do not assume the presence 
of structured interferers. These cases are reviewed only, as they form the basis of our extensions to other models.
\item 
The signal visits an  {\em unknown subspace of known dimension,  unconstrained by a prior distribution.}  Again we call this a {\em first-order model}. The derived
GLRTs are original.
\item 
The signal visits a  {\em known subspace,  constrained by 
a Gaussian prior distribution.}  We call this a 
{\em second-order  model}, as the signal model appears 
as a {\em constrained, low-rank  component} in 
the covariance matrix  of a multivariate Gaussian distribution 
for the measurements.  
Adaptive estimate and plug  GLRTs 
have been derived in \cite{Ricci-Scharf,Jin-Friedlander}. 
The GLRTs of the current paper for this problem are original.
\item
The signal visits an  {\em unknown subspace of known dimension,  
constrained by a Gaussian prior distribution};
this is a {\em second-order  model}.
The estimated low-rank covariance matrix for the subspace signal may be 
called an adaptive factor model. 
The resulting GLRTs appear in this paper for the first time.
\end{itemize}

\subsection{Notation}
In the sequel, vectors and matrices are denoted by boldface lower-case and upper-case letters, respectively.
Symbols $\det(\cdot)$, $\tr(\cdot)$, 
$\etr \cdot$, 
$(\cdot)^T$, $(\cdot)^*$, and $(\cdot)^\dag$ denote the determinant, trace, exponential of the trace, transpose, complex conjugate,
and conjugate transpose, respectively. 
As to numerical sets, 
$\C$ is the set of 
complex numbers, $\C^{N\times M}$ is the Euclidean space of $(N\times M)$-dimensional 
complex matrices, and $\C^{N}$ is the Euclidean space of $N$-dimensional 
complex vectors. 
$\bI_n$ and $\bzero_{m,n}$ stand for the $n \times n$ identity matrix and the $m \times n$ null matrix.
$\langle \bH \rangle$ denotes the space spanned by the columns of the matrix $\bH
\in \C^{N\times M}$.
Given $a_1, \ldots, a_N \in\C$, $\diag(a_1, \ldots, a_N)
\in\C^{N\times N}$ indicates 
the diagonal matrix whose $i$th diagonal element is $a_i$.

We write $\bz\sim \cC\cN_N(\bx, \bSigma)$ to say that the $N$-dimensional random vector 
$\bz$ is  a complex normal random vector with mean vector $\bx$ and covariance matrix $\bSigma$. 
Moreover, $\bZ=[\bz_1 \cdots \bz_K] \sim \cC\cN_{NK}(\bX, \bI_K\otimes \bSigma)$, with $\otimes$ denoting Kronecker product and
$\bX=[\bx_1 \cdots \bx_K]$,
means that
$\bz_k\sim \cC\cN_N(\bx_k, \bSigma)$ and the columns of $\bZ$ are statistically independent.
The acronym PDF stands for probability density function, which is generally denoted $f(\bz; \Theta)$, where $\Theta$ denotes the set of parameters that determines the PDF. When it is necessary to speak of a likelihood function, then the roles of $\bz$ and $\Theta$ are reversed so that $\ell(\Theta;\bz)$ denotes the likelihood of parameters $\Theta$, given the measurement $\bz$, and
$L(\Theta;\bz)=\log \ell(\Theta;\bz)$ is the log-likelihood. $E[\cdot]$ denotes the statistical expectation.
Finally, $\makebox{vec}(\cdot)$ is the column vectorizing operator.

\section{Four Problems in Adaptive Subspace Detection}

For subsequent developments, let us denote by  $\bZ_P=[ \bz_1 \cdots \bz_{K_P}]\in\C^{N \times K_P}$ the matrix 
of the measurements in the primary channel and by 
$\bZ_S=[ \bz_{K_P+1} \cdots \bz_{K_P+K_S}]\in\C^{N \times K_S}$ the matrix of the measurements in the secondary channel. 
In a radar problem 
the measurements are $N$-dimensional vectors of space-time samples: the radar system transmits a burst of $N_p$ radio frequency (RF) pulses and the baseband representations of the  RF signals collected at the radar antenna are sampled to
form range-gate samples for each pulse.
If the signal presence is sought in a subset of $K_P$
range gates, the primary channel consists of
$N_a N_p K_P$ samples.
The samples corresponding to any range gate are arranged in a column vector $\bz_k\in\C^{N}$ 
with $N = N_aN_p$. The secondary channel consists of the
outputs of $K_S$ properly selected range gates  \cite{RichardsBasicPrinciples}. Finally, let 
$\bZ=[\bZ_P \ \bZ_S] \in \C^{N \times K}$ be the overall data matrix with $K=K_P+K_S$.

\subsection{First-order models}

In a first-order model for measurements, the adaptive 
detection problem 
may be  formulated as the following
test of hypothesis $H_0$ vs alternative $H_1$:
\begin{equation}
\begin{array}{ll}
H_{0}: & 
\left\{
\begin{array}{l}
\bZ_P \sim \cC \cN_{NK_P} (\bzero_{N,K_P}, \bI_{K_P}\otimes \bR) \\ \bZ_S \sim \cC\cN_{NK_S} (\bzero_{N,K_S},  \bI_{K_S}\otimes \gamma\bR)
\end{array} \right. \\ \\
H_{1}: &  
\left\{
\begin{array}{l}
\bZ_{P} \sim \cC\cN_{NK_P} (\bH \bX, \bI_{K_P}\otimes\bR) 
\\ 
\bZ_S \sim {\cC\cN}_{NK_S} (\bzero_{N,K_S}, \bI_{K_S}\otimes\gamma \bR)
\end{array} \right.
\end{array} 
\label{FO-HT}
\end{equation}
where 
$\bH \in \C^{N \times r}$ is either a known matrix or 
an unknown matrix with known rank $r$, $r \leq N$, $\bX=[\bx_1 \cdots \bx_{K_P}] \in \C^{r \times K_P}$ is the matrix of the unknown signal coordinates,  
$\bR \in \C^{N \times N}$ is an unknown positive definite
covariance matrix while $\gamma>0$ is either a known or an unknown parameter. 
In the following, we suppose that $K_S \geq N$ and, without loss of generality, that
$\bH$ is a slice of unitary matrix.

\subsection{Second-order models}
In a second-order model for measurements, the distributions above are treated as conditional distributions, 
and a prior Gaussian distribution is assumed for the matrix $\bX$, namely, $\bX\sim \cC\cN_{NK_P}(\bzero, \bI_{K_P} \otimes \bR_s)$.
The joint distribution of $\bZ_P$ and $\bX$ is marginalized for $\bZ_P$ obtaining that $\bZ_P\sim \cC\cN_{NK_P}(\bzero, \bI_{K_P} \otimes (\bH \bR_s \bH^{\dag}+\bR))$. 

The adaptive detection problem 
may be  formulated as the following
test of hypothesis $H_0$ vs alternative $H_1$:
\begin{equation}
\begin{array}{ll}
H_{0}: & \!\!\! \!\!
\left\{\!
\begin{array}{l}
\bZ_P \sim \cC \cN_{NK_P} (\bzero_{N,K_P}, \bI_{K_P}\otimes \bR) \\ \bZ_S \sim \cC\cN_{NK_S} (\bzero_{N,K_S},  \bI_{K_S}\otimes \gamma\bR) 
\end{array} \right.
\\ \\
H_{1}: & \!\!\!\!\! 
\left\{\!
\begin{array}{l}
\bZ_{P} \sim \cC\cN_{NK_P} (\bzero_{N,K_P}, \bI_{K_P}\otimes (\bH \bR_s \bH^{\dag}+\bR)) \\
\bZ_S \sim {\cC\cN}_{NK_S} (\bzero_{N,K_S}, \bI_{K_S}\otimes\gamma \bR)
\end{array} \right.
\end{array} 
\label{SO-HT}
\end{equation}
where
$\bH \in \C^{N \times r}$ is either a known matrix or an 
unknown matrix with known rank $r$, $r \leq N$,   
$\bR_s \in \C^{r \times r}$ is an unknown positive semidefinite matrix (in order to account for possible correlated sources),
$\bR \in \C^{N \times N}$ is an unknown positive definite
matrix, while $\gamma>0$ is either a known or an unknown parameter. 
Again, we suppose that $\bH$ is a slice of unitary matrix and $K_S \geq N$.

\subsection{Interpretations and important statistics}
In the derivation of adaptive subspace detectors for first-order models, several data matrices and derived statistics arise. They are summarized and annotated here. 
\begin{itemize}
\item $\bS_S=\bZ_S\bZ_S^{\dag} \in \C^{N \times N}$: sample covariance matrix for secondary channel; for $K_S\ge N$, the  covariance matrix $\bS_S$ is  positive definite with probability (wp) $1$;
\item $\bS_P=\bZ_P\bZ_P^{\dag} \in \C^{N \times N}$: sample covariance matrix for primary channel; $S_P$ is positive semidefinite with rank $\min(K_P,N)$ wp $1$;
\item $ \bT_P=\bS_S^{-1/2}\bZ_P\bZ_P^{\dag}\bS_S^{-1/2}\in \C^{N \times N}$: sample covariance matrix for  measurements in the primary channel that have been whitened by  the square root of the sample covariance matrix computed in the secondary channel; $\bT_P$ is positive semidefinite of rank $\min(K_p,N)$; 
\item $\bG=\bS_S^{-1/2}\bH \in \C^{N \times r}$: {\em whitened} subspace basis; $\bH \in \C^{N\times r}$ is a unitary basis for the $r$-dimensional subspace $\langle \bH \rangle$;
\item $\bP_G^{\perp}=\bI_N-\bG(\bG^{\dag}\bG)^{-1}\bG^{\dag} \in \C^{N\times N}$: projection matrix  onto the orthogonal complement of the dimension-$r$ subspace $\langle \bG\rangle$.
\end{itemize}
Importantly, the eigenvalues of the statistics $\bT_P$  and $\bP_G^{\perp}\bT_P\bP_G^{\perp}$ are two dramatic compressions of the primary and secondary data that  figure prominently in the first-order detectors to be derived in this paper.

\section{First-order detectors: derivations}

The GLRTs for problem (\ref{FO-HT}) can be obtained by exploiting the results
in \cite{BDMGR}. 
Therein, both homogeneous and partially-homogeneous environments are considered, and measurements
contain noise plus interference drawn 
from a subspace that is either known or unknown up to its rank. As a matter of fact, the derivation of the compressed likelihood under the 
$H_0$ hypothesis in \cite{BDMGR} is the starting point for the derivation of the GLRTs for problem (\ref{FO-HT}).
\medskip

The joint PDF of 
primary and secondary data
is given by
\begin{multline*}
f_1( \bZ; \bR,  \bX, \bH, \gamma) =
\frac{\etr{-\frac{1}{\gamma} \bR^{-1}\bZ_S \bZ_S^{\dag}}}
{\pi^{N K} \gamma^{NK_S} \det^{K}(\bR)} 
\\  
\times
\etr{-
\bR^{-1}   
\left( \bZ_P - \bH \bX \right) \left( \bZ_P - \bH \bX \right)^{\dag}}
\end{multline*}
under $H_1$ and under $H_0$ by
\be
f_0( \bZ; \bR, \gamma) =
\frac{\etr{-\bR^{-1} \bZ_P \bZ_P^{\dag}+\frac{1}{\gamma} \bR^{-1} \bZ_S \bZ_S^{\dag}}}
{\pi^{N K} \gamma^{NK_S} \det^{K}(\bR)}.
\label{eq:PDF_H0}
\ee

\subsection{Known subspace $\langle \bH \rangle$, known  $\gamma$
}

Under $H_1$, the likelihood is maximized through the maximum likelihood (ML) estimates of $\bR$ and $\bX$ to produce the 
partially-compressed likelihood \cite{BDMGR}
\begin{align}
\label{eq:PDF_H1_FO_knownH}
&\ell_1(\widehat{\bR}, \widehat{\bX}, \bH, \gamma; \bZ) =
\left(\frac{K}{e \pi}\right)^{N K} \frac{1}{\gamma^{K_P(K-N)}}
\frac{1}{\det^{K}(\bS_S)} 
\\ \nonumber &\times
\frac{1}{ {\det}^K \left[ \frac{1}{\gamma}  \bI_{K_P} +
\left( \bS_S^{-1/2} \bZ_P \right)^{\dag} \bP_G^{\perp}
\left( \bS_S^{-1/2} \bZ_P \right)
\right] }
\\ \nonumber &=
\left(\frac{K}{e \pi}\right)^{N K} \frac{1}{\gamma^{K_SN}}
\frac{1}{\det^{K}(\bS_S)}  \nonumber
\\  \nonumber
&\times
\frac{1}{ {\det}^K \left[ \frac{1}{\gamma}  \bI_{N} +
\bP_G^{\perp} \left( \bS_S^{-1/2} \bZ_P \right)  
\left( \bS_S^{-1/2} \bZ_P \right)^{\dag} \bP_G^{\perp}
\right] }
\end{align}
where 
we have used the identity
${\det} \left[ \frac{1}{\gamma}  \bI_{M} +
\bA \bB
\right]= \gamma^{N-M}
{\det} \left[ \frac{1}{\gamma}  \bI_{N} +
\bB \bA
\right]$
with $\bA \in \C^{M \times N}$ and $\bB \in \C^{N \times M}$.
It is also straightforward to show that compressed likelihood under $H_0$ is
\begin{align}
\label{eq:PDF_H0_FO_knownH}
&\ell_0( \widehat{\bR}, \gamma; \bZ) =
\left(\frac{K}{e \pi}\right)^{N K} 
\frac{\gamma^{-K_P(K-N)}\det^{-K}(\bS_S)}{ {\det}^K \left[ \frac{1}{\gamma}  \bI_{K_P} + \bZ_P^{\dag}
\bS_S^{-1} \bZ_P 
\right] }\nonumber
\\ &=
\left(\frac{K}{e \pi}\right)^{N K} 
\frac{\gamma^{-K_SN} \det^{-K}(\bS_S)}{ {\det}^K \left[ \frac{1}{\gamma}  \bI_{N} + 
\bS_S^{-1/2} \bZ_P \bZ_P^{\dag} \bS_S^{-1/2}
\right] }.
\end{align}
\medskip

It follows that the GLRT for homogeneous environment (i.e., $\gamma=1$) and $r < N$, referred to in the following as
first-order known subspace in homogeneous environment (FO-KS-HE) detector, is given by
\be
\label{eq:1S-FO-GLRT-KH-HE}
\frac{{\det}\left[ \bI_{K_P} + \bZ_P^{\dag}
\bS_S^{-1} \bZ_P \right]}{
{\det} \left[ \bI_{K_P} +
\left( \bS_S^{-1/2} \bZ_P \right)^{\dag} \bP_G^{\perp}
\left( \bS_S^{-1/2} \bZ_P \right)
\right]
}
\test \eta
\ee
or, equivalently, as
\be
\label{eq:1S-FO-GLRT-KH-HE_alternative_form}
\frac{{\det}\left[ \bI_{N} + \bT_P \right]
}{
{\det} \left[ \bI_{N} +
\bP_G^{\perp} \bT_P \bP_G^{\perp}
\right]
}
\test \eta.
\ee
The expression  in eqn \eqref{eq:1S-FO-GLRT-KH-HE_alternative_form}  illuminates the role of the {\em secondarily whitened primary data} $\bS_S^{-1/2}\bZ_P$,  its corresponding  whitened sample covariance $\bT_P$,  and the sample covariance of whitened measurements after their projection onto the subspace $\bP_G^{\perp}$. The GLRT is a function only of the eigenvalues of $\bT_P$ and the eigenvalues of $\bP_G^{\perp} \bT_P\bP_G^{\perp}$.
For $r=N$ and $\gamma=1$ the GLRT reduces to
$$
{\det}\left[ \bI_{K_P} + \bZ_P^{\dag}
\bS_S^{-1} \bZ_P \right]
\test \eta
\label{eq:1S-FO-GLRT-KH-HE-r=N}
$$
or equivalently to
$$
{\det}\left[ \bI_{N} + \bT_P
 \right]
\test \eta.
$$
These GLRTs are derived  for $\gamma=1$, but generalization to any known value of $\gamma$ is obviously straightforward.
In particular, if $\gamma$ is known we can normalize the secondary data by the square root of $\gamma$, thus obtaining the homogeneous environment. For this reason herafter  we will focus on $\gamma=1$ if $\gamma$ is known.
\medskip

\subsection{Known subspace $\langle \bH \rangle$, unknown  $\gamma$}

Determining the GLRT for a partially-homogeneous environment
requires one more maximization of the likelihoods with respect to $\gamma$, namely the
computation of 
\[
\max_{\gamma>0} \ell_1(\widehat{\bR}, \widehat{\bX}, \bH, \gamma; \bZ)
\quad \mbox{and}\quad
\max_{\gamma>0} \ell_0(\widehat{\bR}, \gamma; \bZ).
\]

For $r=N$ the likelihood under $H_1$ is unbounded with respect to $\gamma >0$ and, hence, the GLRT does not exist.
Therefore we assume $r<N$.
The following result derived in \cite{CDMR2001,BDMGR} is recalled here for the sake of completeness.
 
\begin{theorem}
\label{theorem_minimization_over_gamma_FH_PHE}
Let $\bM \in \C^{K_P \times K_P}$ be a positive semidefinite (Hermitian) matrix of 
rank $t$ ($1 \leq t \leq K_P$). Then, the function
\be
f(\gamma)=\gamma^{\frac{K_P(K-N)}{K}} \det \left( \frac{1}{\gamma}
\bI_{K_P} + \bM \right), \quad \gamma >0,
\label{f_gamma}
\ee
attains its absolute minimum at the unique positive solution of
\be
\sum_{k=K_P-t+1}^{K_P}
\frac{\lambda_k \gamma}{\lambda_k \gamma +1}=
\frac{NK_P}{K}
\label{eq_min_f_gamma}
\ee
where the $\lambda_k$s are the eigenvalues of the matrix $\bM$ arranged in increasing order ($\lambda_k=0$, $k=1, \ldots, K_P-t$) and
provided that $t > \frac{NK_P}{K}$.
If $t = \frac{NK_P}{K}$, then $f(\gamma)$ does not possess the absolute minimum over $(0, +\infty)$, but its infimum is positive; finally, if
$t < \frac{NK_P}{K}$, the infimum of $f(\gamma)$ over $(0, +\infty)$ is zero.
\end{theorem}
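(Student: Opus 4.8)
The plan is to diagonalize $\bM$, which collapses $f(\gamma)$ to a scalar function of the positive eigenvalues of $\bM$, and then to study that function through its logarithmic derivative. First I would write $\bM=\bU\bLambda\bU^\dag$ with $\bU$ unitary and $\bLambda=\diag(\lambda_1,\ldots,\lambda_{K_P})$, $\lambda_1=\cdots=\lambda_{K_P-t}=0<\lambda_{K_P-t+1}\le\cdots\le\lambda_{K_P}$. Since $\det\!\big(\tfrac1\gamma\bI_{K_P}+\bM\big)=\prod_{k=1}^{K_P}\big(\tfrac1\gamma+\lambda_k\big)=\gamma^{-(K_P-t)}\prod_{k=K_P-t+1}^{K_P}\big(\tfrac1\gamma+\lambda_k\big)$, and since the $\gamma$-exponents satisfy $\tfrac{K_P(K-N)}{K}-(K_P-t)=t-\tfrac{NK_P}{K}$ while $\tfrac1\gamma+\lambda_k=\gamma^{-1}(1+\lambda_k\gamma)$, collecting powers of $\gamma$ yields the compact form
\[
f(\gamma)=\gamma^{-\frac{NK_P}{K}}\prod_{k=K_P-t+1}^{K_P}(1+\lambda_k\gamma)=\gamma^{\,t-\frac{NK_P}{K}}\prod_{k=K_P-t+1}^{K_P}\Big(\frac1\gamma+\lambda_k\Big).
\]

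From this I would read off the behavior at the ends of $(0,+\infty)$: the first expression gives $f(\gamma)\to+\infty$ as $\gamma\to0^+$, while the second gives $f(\gamma)\sim\big(\prod_{k=K_P-t+1}^{K_P}\lambda_k\big)\,\gamma^{\,t-\frac{NK_P}{K}}$ as $\gamma\to+\infty$, so that $\lim_{\gamma\to\infty}f(\gamma)$ equals $+\infty$, the positive constant $\prod_{k=K_P-t+1}^{K_P}\lambda_k$, or $0$, according as $t>\tfrac{NK_P}{K}$, $t=\tfrac{NK_P}{K}$, or $t<\tfrac{NK_P}{K}$. These limits already pin down the infima in the two degenerate cases, and show the minimizer in the first case must be interior.

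Next, since $f>0$ on $(0,+\infty)$, I would differentiate the logarithm:
\[
\gamma\,\frac{d}{d\gamma}\log f(\gamma)=-\frac{NK_P}{K}+\sum_{k=K_P-t+1}^{K_P}\frac{\lambda_k\gamma}{1+\lambda_k\gamma}=:g(\gamma)-\frac{NK_P}{K}.
\]
Each summand in $g$ is strictly increasing (its derivative is $\lambda_k/(1+\lambda_k\gamma)^2>0$) and maps $(0,+\infty)$ onto $(0,1)$, so $g$ is continuous and strictly increasing with $g(0^+)=0$ and $g(+\infty)=t$; hence the sign of $f'$ coincides with that of $g(\gamma)-\tfrac{NK_P}{K}$, and the three cases fall out. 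If $t>\tfrac{NK_P}{K}$, then $\tfrac{NK_P}{K}$ lies in the open range $(0,t)$ of $g$, so strict monotonicity and the intermediate value theorem give a unique $\gamma^\star>0$ with $g(\gamma^\star)=\tfrac{NK_P}{K}$ — i.e.\ the unique positive root of \eqref{eq_min_f_gamma} — with $f'<0$ on $(0,\gamma^\star)$ and $f'>0$ on $(\gamma^\star,+\infty)$, so $\gamma^\star$ is the unique global minimizer. If $t=\tfrac{NK_P}{K}$, then $g(\gamma)<t=\tfrac{NK_P}{K}$ for every $\gamma>0$, so $f$ is strictly decreasing, has no minimizer, and $\inf_{\gamma>0}f(\gamma)=\lim_{\gamma\to\infty}f(\gamma)=\prod_{k=K_P-t+1}^{K_P}\lambda_k>0$. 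If $t<\tfrac{NK_P}{K}$, likewise $g(\gamma)<t<\tfrac{NK_P}{K}$, $f$ is strictly decreasing, and $\inf_{\gamma>0}f(\gamma)=\lim_{\gamma\to\infty}f(\gamma)=0$.

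I expect no genuine obstacle here: the only steps requiring care are the bookkeeping of the $\gamma$-exponents when passing to the compact form and remembering to peel off the $\lambda_k=0$ factors of $\det(\tfrac1\gamma\bI_{K_P}+\bM)$ before differentiating. Once $f$ is written as $\gamma^{-NK_P/K}\prod_{k}(1+\lambda_k\gamma)$, strict monotonicity of $g$ and elementary one-variable calculus do all the work.
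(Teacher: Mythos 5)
Your proof is correct, and it follows essentially the same route as the argument the paper relies on (the paper defers this proof to \cite{CDMR2001,BDMGR}, and its own appendix proofs of the analogous Corollaries and of Theorem \ref{Theorem4} use exactly this technique): reduce to the eigenvalues, check the limits of $f$ at $0^+$ and $+\infty$ to separate the three cases $t\gtrless NK_P/K$, and locate the minimizer through the sign of $\gamma\,\frac{d}{d\gamma}\log f(\gamma)$, whose monotone part $\sum_k \lambda_k\gamma/(1+\lambda_k\gamma)$ increases from $0$ to $t$. No gaps.
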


\begin{proof}
See \cite{CDMR2001,BDMGR}.
\end{proof}
 
To use this theorem, it is necessary to determine the rank of the matrices
$\bM_0=\bZ_P^{\dag}
\bS_S^{-1} \bZ_P$ and
$\bM_1=\left( \bS_S^{-1/2} \bZ_P \right)^{\dag} \bP_G^{\perp}
\left( \bS_S^{-1/2} \bZ_P \right)$ and whether or not  the 
condition on the rank is satisfied. Preliminarily,
we give the following lemma that can be easily proved following the lead of \cite[Theorem 3.1.4 pag. 82]{Muirhead}.
\bigskip

\begin{lemma}
Let $\bz_1, \ldots, \bz_m$ be $m$ independent and complex normal Gaussian vectors
with positive definite covariance matrix, i.e.,
$\bz_k \sim \cC\cN_N({\bom}_k, {\bR}_k)$. The rank of the matrix $[\bz_1 \cdots \bz_m]$ is equal to the minimum among $m$ and $N$ with probability one.
\end{lemma}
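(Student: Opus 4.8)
The plan is to establish the claim by induction on $m$, or equivalently by a direct measure-theoretic argument showing that the set of ``defective'' configurations has Lebesgue measure zero. Let $r=\min(m,N)$. Since the rank of $[\bz_1 \cdots \bz_m]$ is at most $r$ deterministically, it suffices to show that the rank is at least $r$ with probability one, i.e., that some $r\times r$ minor of $[\bz_1 \cdots \bz_m]$ (after selecting $r$ linearly independent columns if $m>N$, or all $m$ columns if $m\le N$) is nonzero almost surely. Equivalently, I would show $\Pr\{\mathrm{rank}[\bz_1\cdots\bz_m] < r\} = 0$.

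First I would treat the base case $m=1$: a Gaussian vector $\bz_1 \sim \cC\cN_N(\bom_1,\bR_1)$ with $\bR_1$ positive definite has $\Pr\{\bz_1 = \bzero\} = 0$, since a nondegenerate continuous distribution assigns zero mass to any single point; hence $\mathrm{rank}[\bz_1]=1$ wp $1$. For the inductive step, assume $[\bz_1\cdots\bz_{m-1}]$ has rank $\min(m-1,N)$ wp $1$. If $m-1 \ge N$, the rank is already $N$ and appending $\bz_m$ cannot decrease it, so the conclusion holds trivially. If $m-1 < N$, then wp $1$ the columns $\bz_1,\ldots,\bz_{m-1}$ span an $(m-1)$-dimensional subspace $V$; conditioning on $\bz_1,\ldots,\bz_{m-1}$ (hence on $V$), the vector $\bz_m$ is independent of them and, having a continuous nondegenerate distribution on $\C^N$, lies in the fixed proper subspace $V$ with conditional probability zero (a proper affine/linear subspace has Lebesgue measure zero and $\bz_m$'s law is absolutely continuous because $\bR_m \succ 0$). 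Taking expectation over the conditioning event gives $\Pr\{\bz_m \in \langle \bz_1,\ldots,\bz_{m-1}\rangle\} = 0$, so the rank increases to $\min(m,N)$ wp $1$.

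The key technical point I want to isolate is the statement ``a random vector with an absolutely continuous distribution on $\C^N$ (equivalently $\R^{2N}$) assigns probability zero to any fixed proper linear subspace''; this is immediate because such a subspace is a null set for Lebesgue measure, and the Gaussian density is finite. The other point requiring a line of care is the conditioning argument: independence of $\bz_m$ from $\{\bz_1,\ldots,\bz_{m-1}\}$ ensures that the conditional law of $\bz_m$ given the others is still $\cC\cN_N(\bom_m,\bR_m)$, so the per-configuration probability bound is uniform and the tower property applies cleanly. This is exactly the structure of the proof of \cite[Theorem 3.1.4, p.~82]{Muirhead}, adapted to the complex case and to possibly distinct means and covariances, which changes nothing essential since only absolute continuity and pairwise independence are used.

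I do not expect a genuine obstacle here; the only thing to be careful about is bookkeeping the two regimes $m \le N$ and $m > N$ so that the claimed rank $\min(m,N)$ is stated correctly throughout, and making sure the positive-definiteness hypothesis on each $\bR_k$ is invoked precisely where absolute continuity of the marginal law of $\bz_m$ is needed. A one-paragraph writeup along the above lines should suffice.
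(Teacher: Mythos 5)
Your argument is correct, and it proves exactly what the lemma asserts under exactly the stated hypotheses. Note, however, that the paper does not write out a proof at all: it simply says the lemma ``can be easily proved following the lead of'' Muirhead's Theorem 3.1.4, whose printed proof is the one-shot null-set argument --- by independence and positive definiteness of each $\bR_k$ the joint law of the $N\times m$ matrix is absolutely continuous with respect to Lebesgue measure on $\C^{N\times m}$, and the set of matrices of rank less than $\min(m,N)$ is the zero set of a nontrivial polynomial (a suitable minor, or the sum of squared moduli of all $\min(m,N)\times\min(m,N)$ minors), hence Lebesgue-null, hence of probability zero. Your column-by-column induction with conditioning is a genuinely different (and slightly more elementary) route: it replaces the ``zero set of a nonzero polynomial is null'' fact by the simpler observation that a fixed proper subspace is null for an absolutely continuous law, at the price of the tower-property bookkeeping; it also makes transparent that distinct means and covariances cost nothing, since only independence and absolute continuity of each column are used. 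Your remark that this conditioning structure ``is exactly'' Muirhead's proof is a small misattribution --- his argument is the direct measure-zero one you mention as the alternative --- but this does not affect correctness. Either write-up would serve as a proof of the lemma.
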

\bigskip

It is also easy to prove
the following theorem concerning the rank of the matrices $\bM_0$ and $\bM_1$.
\bigskip

\begin{theorem}
\label{TheoremRank02}
The rank of 
\be
\bM_0=\bZ_P^{\dag}
\bS_S^{-1} \bZ_P
\label{def:M0}
\ee 
is
$t_0=\min(K_P,N)$
and $t_0 > \frac{NK_P}{K}$ since $K>K_P$ and $K > N$.
Similarly, 
the rank of 
\be
\bM_1=\left( \bS_S^{-1/2} \bZ_P \right)^{\dag} 
\bP_G^{\perp}
\left( \bS_S^{-1/2} \bZ_P \right)
\label{def:M1}
\ee
is
$t_1=\min(K_P,N-r)$. It follows that
$t_1>NK_P/K$ when $K_P\leq N-r$ (since $K>N$); for $N-r < K_P$, the condition is $N-r>NK_P/K$, which requires $r<N(1-K_P/K)$. 


\end{theorem}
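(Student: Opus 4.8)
The plan is to prove the two rank claims by identifying the matrices $\bM_0$ and $\bM_1$ with Gram-type matrices built from data matrices of known rank, and then invoke Lemma~1 together with the rank-preserving action of invertible and idempotent transformations. First I would handle $\bM_0=\bZ_P^{\dag}\bS_S^{-1}\bZ_P$. Since $K_S\ge N$, the matrix $\bS_S=\bZ_S\bZ_S^{\dag}$ is positive definite wp $1$, so $\bS_S^{-1/2}$ is well defined and invertible; hence $\mathrm{rank}(\bM_0)=\mathrm{rank}\big((\bS_S^{-1/2}\bZ_P)^{\dag}(\bS_S^{-1/2}\bZ_P)\big)=\mathrm{rank}(\bS_S^{-1/2}\bZ_P)=\mathrm{rank}(\bZ_P)$, the last equality because $\bS_S^{-1/2}$ is invertible. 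By Lemma~1 applied to the independent complex-normal columns of $\bZ_P$ (each $\bz_k\sim\cC\cN_N(\bH\bx_k,\bR)$ under $H_1$, or $\bz_k\sim\cC\cN_N(\bzero,\bR)$ under $H_0$, with $\bR$ positive definite), $\mathrm{rank}(\bZ_P)=\min(K_P,N)$ wp $1$, giving $t_0=\min(K_P,N)$. The inequality $t_0>NK_P/K$ is then immediate: if $K_P\le N$ then $t_0=K_P>NK_P/K$ because $K>N$; if $K_P>N$ then $t_0=N>NK_P/K$ because $K>K_P$.

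Next I would treat $\bM_1=(\bS_S^{-1/2}\bZ_P)^{\dag}\bP_G^{\perp}(\bS_S^{-1/2}\bZ_P)$. Writing $\bW=\bS_S^{-1/2}\bZ_P\in\C^{N\times K_P}$, we have $\bM_1=\bW^{\dag}\bP_G^{\perp}\bW=(\bP_G^{\perp}\bW)^{\dag}(\bP_G^{\perp}\bW)$ since $\bP_G^{\perp}$ is a Hermitian idempotent, so $\mathrm{rank}(\bM_1)=\mathrm{rank}(\bP_G^{\perp}\bW)$. The columns of $\bP_G^{\perp}\bW$ are the projections onto the $(N-r)$-dimensional subspace $\langle\bG\rangle^{\perp}$ of the columns of $\bW=\bS_S^{-1/2}\bZ_P$, which are themselves independent complex-normal vectors with a positive definite covariance (the covariance of $\bz_k$ is $\bR$, so that of $\bS_S^{-1/2}\bz_k$, conditionally on $\bZ_S$, is $\bS_S^{-1/2}\bR\bS_S^{-1/2}$, still positive definite wp $1$). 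The cleanest way to count the rank is to pass to an $(N-r)\times K_P$ coordinate representation: let $\bU\in\C^{N\times(N-r)}$ be a matrix whose columns form an orthonormal basis of $\langle\bG\rangle^{\perp}$, so $\bP_G^{\perp}=\bU\bU^{\dag}$ and $\mathrm{rank}(\bP_G^{\perp}\bW)=\mathrm{rank}(\bU^{\dag}\bW)$. The matrix $\bU^{\dag}\bW$ has columns $\bU^{\dag}\bS_S^{-1/2}\bz_k$; conditionally on $\bZ_S$ these are independent complex-normal vectors in $\C^{N-r}$ with covariance $\bU^{\dag}\bS_S^{-1/2}\bR\bS_S^{-1/2}\bU$, which is positive definite wp $1$. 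Lemma~1 then yields $\mathrm{rank}(\bU^{\dag}\bW)=\min(K_P,N-r)$ wp $1$, so $t_1=\min(K_P,N-r)$. Finally the threshold condition $t_1>NK_P/K$ splits into the stated cases: if $K_P\le N-r$ then $t_1=K_P>NK_P/K$ since $K>N$; if $N-r<K_P$ then $t_1=N-r$ and the condition $N-r>NK_P/K$ is equivalent to $r<N(1-K_P/K)$.

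The only subtlety, and the point that needs the most care, is the conditioning argument: $\bG$ (equivalently $\bS_S^{-1/2}$ and hence $\bU$) is a function of the secondary data $\bZ_S$, so $\bU^{\dag}\bW$ is not literally a Gaussian matrix with fixed covariance. The resolution is to condition on $\bZ_S$: conditionally on $\bZ_S=\bz_{S}$ (a fixed value for which $\bS_S$ is positive definite, which holds wp $1$ because $K_S\ge N$ and the secondary columns are independent complex-normal with positive definite covariance $\gamma\bR$), the primary columns remain independent complex-normal with the positive definite covariance $\bU^{\dag}\bS_S^{-1/2}\bR\bS_S^{-1/2}\bU$, so Lemma~1 applies verbatim and gives the rank $\min(K_P,N-r)$ with conditional probability one; since this conditional probability equals one for almost every $\bz_S$, the unconditional probability is one as well. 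The same conditioning justifies the use of $\bS_S^{-1/2}$ in the $\bM_0$ argument. Everything else is a routine application of the identities $\mathrm{rank}(\bA^{\dag}\bA)=\mathrm{rank}(\bA)$ and invariance of rank under left/right multiplication by invertible matrices.
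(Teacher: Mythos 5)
Your proposal is correct and follows essentially the route the paper intends: the paper states the theorem as an easy consequence of Lemma~1, and your argument—reducing $\bM_0$ and $\bM_1$ to Gram matrices of $\bZ_P$ and of $\bU^{\dag}\bS_S^{-1/2}\bZ_P$, invoking rank invariance under invertible factors and the projector, and applying Lemma~1 conditionally on $\bZ_S$—is exactly that consequence spelled out. The conditioning step you flag (justified by the independence of $\bZ_P$ and $\bZ_S$) is the right way to make the application of Lemma~1 rigorous, and your case analysis for $t_0>NK_P/K$ and $t_1>NK_P/K$ matches the statement.
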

\smallskip


It follows that, under the condition $t_1 > \frac{NK_P}{K}$, the GLRT for partially-homogeneous environment, referred to in the following as 
first-order known subspace in partially-homogeneous environment 
(FO-KS-PHE) detector, is given by
\be
\frac{\widehat{\gamma}_0^{\frac{K_P(K-N)}{K}} {\det}\left[ \frac{1}{\widehat{\gamma}_0}  \bI_{K_P} + \bM_0 \right]
}{\widehat{\gamma}_1^{\frac{K_P(K-N)}{K}}
{\det} \left[ \frac{1}{\widehat{\gamma}_1}  \bI_{K_P} +
\bM_1
\right]
}
\test \eta
\label{eq:1S-FO-GLRT-KH-PHE}
\ee
where $\widehat{\gamma}_i$, $i=0,1,$ can be computed using 
Theorem \ref{theorem_minimization_over_gamma_FH_PHE} and $\bM_0$ and $\bM_1$ are given by eq. (\ref{def:M0}) and
eq. (\ref{def:M1}), respectively.
The equivalent form is more illuminating:
$$
\frac{\widehat{\gamma}_0^{N(1-K_P/K)} {\det}\left[ \frac{1}{\widehat{\gamma}_0}  \bI_{N} + \bT_P \right]
}{\widehat{\gamma}_1^{N(1-K_P/K)}
{\det} \left[ \frac{1}{\widehat{\gamma}_1}  \bI_{N} +
\bP_G^{\perp}\bT_P\bP_G^{\perp}
\right]
}
\test \eta.
$$ 
 Again, the GLRT is a function only of the eigenvalues of 
$\bT_P$ and $\bP_G^{\perp}\bT_P\bP_G^{\perp}$. In fact, $\bM_0$ and $\bT_P$ share the nonzero eigenvalues. Similarly for $\bM_1$ and $\bP_G^{\perp}\bT_P\bP_G^{\perp}$.

\subsection{Unknown subspace $\langle \bH \rangle$ of known dimension, known  $\gamma$}

The signal subspace $\langle \bH \rangle$ is unknown, but its rank $r \leq N$ is known.
To compute the compressed likelihood under $H_1$, 
the parameter $\bH$ is replaced by its ML estimate in
eq. (\ref{eq:PDF_H1_FO_knownH}). The maximization with respect to $\bH$ can be conducted as shown in \cite{BDMGR}. The result is
\be
\ell_1(\widehat{\bR}, \widehat{\bX}, \widehat{\bH}, \gamma; \bZ) =
\frac{[K/(e\pi)]^{NK}}{\gamma^{K_P(K-N)}}
\frac{1}{\det^{K}(\bS_S)} 
\frac{1}{ g_1^K(\gamma) }
\label{eq:PDF_H1_FO_unknownH}
\ee
where
$$
g_1(\gamma)=
\left\{
\begin{array}{ll}
\gamma^{N-r-K_P} \prod_{i=1}^{N-r} \left( \frac{1}{\gamma} + \sigma^2_i
\right), & m_1 \geq r+1 \\
\left( \frac{1}{\gamma}\right)^{K_P}, & \mbox{otherwise}
\end{array}
\right.
$$
with
$m_1=\min(N,K_P)$ the rank of the matrix $\bS_S^{-1/2} \bZ_P \bZ_P^{\dag}
\bS_S^{-1/2}$ and $\sigma^2_i$, $i=1, \ldots, N,$ the corresponding eigenvalues arranged in increasing order. Moreover, the compressed likelihood under $H_0$ can be re-written as
\be
\ell_0(\widehat{\bR}, \gamma;  \bZ) =
\frac{[K/(e\pi)]^{NK}}{\gamma^{K_P(K-N)}}
\frac{1}{\det^{K}(\bS_S)} 
\frac{1}{ g_0^K(\gamma) }
\label{eq:PDF_H0_FO_unknownH}
\ee
with $g_0(\gamma)= \gamma^{N-K_P} \prod_{i=1}^{N} \left( \frac{1}{\gamma} + \sigma^2_i  \right)$.

\smallskip

It follows that, if $\min(N,K_P) \geq r +1$, the GLRT for homogeneous environment, referred to in the following as 
first-order unknown subspace in homogeneous environment 
(FO-US-HE) detector, is given by
\be
\frac{ \prod_{i=1}^{N} \left( 1 + \sigma^2_i
 \right)
}{
\prod_{i=1}^{N-r} \left( 1 + \sigma^2_i
 \right)
}
= \prod_{i=N-r+1}^{N} \left( 1 + \sigma^2_i
 \right)
\test \eta.
\label{eq:1S-FO-GLRT-UH-HE}
\ee
For $m_1< r +1$, the GLRT reduces to
\be
\prod_{i=1}^{N} \left( 1 + \sigma^2_i
 \right)= {\det}\left( \bI_N + \bS_S^{-1/2} \bZ_P \bZ_P^{\dag}
 \bS_S^{-1/2}\right)
\test \eta.
\label{eq:1S-FO-GLRT-UH-HE-r=N}
\ee
Notice also that condition $m_1 < r +1$ is equivalent to $N=r$ if $N < K_P$ (recall that $N \geq r$) or to $K_P < r+1$ if $K_P \leq N$.
\medskip

\subsection{Unknown subspace $\langle \bH \rangle$ of known dimension, unknown  $\gamma$}

To obtain the GLRT for partially-homogeneous environment
we have to maximize the partially-compressed likelihoods over $\gamma$.
We focus on $m_1\geq r +1$; in fact, for $m_1 < r +1$ the likelihood under $H_1$ is unbounded with respect to $\gamma$ and, hence, the GLRT does not exist. 
Equivalently, we have to minimize with respect to $\gamma$ the following functions
\[
f_1(\gamma) = \gamma^{\frac{-K_PN}{K}} \prod_{i=N-m_1+1}^{N-r} \left( 1 + \gamma \sigma^2_i \right)
\]
and 
\[
f_2(\gamma) = \gamma^{\frac{-K_PN}{K}} \prod_{i=N-m_1+1}^{N} \left( 1 + \gamma \sigma^2_i \right).
\]
Proceeding as in the proof of Theorem \ref{theorem_minimization_over_gamma_FH_PHE}, we obtain the following results.
 
 \begin{corollary}
 \label{corollary1_minimization_over_gamma_FH_PHE}
 The function
 $$
 f_1(\gamma)=\gamma^{\frac{-K_PN}{K}}
 \prod_{i=N-m_1+1}^{N-r} \left( 1 + \gamma \sigma^2_i \right)
 $$
 attains its absolute minimum over $(0,+\infty)$
 at the unique positive solution of
\be
\sum_{i=N-m_1+1}^{N-r}
\frac{\sigma^2_i \gamma}{\sigma^2_i \gamma +1}=
\frac{NK_P}{K}
\label{eq_min_f_1_gamma}
\ee
provided that $m_1-r > \frac{NK_P}{K}$.
If $m_1-r  = \frac{NK_P}{K}$, then $f_1(\gamma)$ does not possess the absolute minimum over $(0, +\infty)$, but its infimum is positive; finally, if
$m_1-r  < \frac{NK_P}{K}$, the infimum of $f_1(\gamma)$ over $(0, +\infty)$ is zero.
\end{corollary}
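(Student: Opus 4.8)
The plan is to mirror the argument behind Theorem~\ref{theorem_minimization_over_gamma_FH_PHE}, taking advantage of the fact that $f_1$ is already written in product form. Since $m_1=\min(N,K_P)$ is the rank of $\bS_S^{-1/2}\bZ_P\bZ_P^{\dag}\bS_S^{-1/2}$ and the $\sigma_i^2$ are listed in increasing order, the eigenvalues with indices $i=N-m_1+1,\dots,N-r$ that enter $f_1(\gamma)$ are all strictly positive; write $p:=m_1-r\ge 1$ for their number. First I would pass to the logarithm and study
\[
\phi(\gamma):=\log f_1(\gamma)=-\frac{NK_P}{K}\log\gamma+\sum_{i=N-m_1+1}^{N-r}\log(1+\gamma\sigma_i^2),
\]
which is $C^\infty$ on $(0,+\infty)$ and, up to the monotone exponential, has the same minimizers and the same infimum as $f_1$.

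Differentiating gives $\phi'(\gamma)=\frac{1}{\gamma}\big(g(\gamma)-\frac{NK_P}{K}\big)$ with $g(\gamma):=\sum_{i=N-m_1+1}^{N-r}\frac{\gamma\sigma_i^2}{1+\gamma\sigma_i^2}$. Each summand equals $1-\frac{1}{1+\gamma\sigma_i^2}$, hence is continuous and strictly increasing on $(0,+\infty)$, tending to $0$ as $\gamma\to0^+$ and to $1$ as $\gamma\to+\infty$; therefore $g$ is a strictly increasing bijection of $(0,+\infty)$ onto $(0,p)$. The statement then reduces to a sign analysis of $\phi'$, exactly as in Theorem~\ref{theorem_minimization_over_gamma_FH_PHE}. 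If $\frac{NK_P}{K}<p$, the equation $g(\gamma)=\frac{NK_P}{K}$ has a unique root $\widehat\gamma>0$, and monotonicity of $g$ yields $\phi'<0$ on $(0,\widehat\gamma)$ and $\phi'>0$ on $(\widehat\gamma,+\infty)$, so $\widehat\gamma$ is the unique stationary point and the global minimizer; I would corroborate this with the asymptotics $\phi(\gamma)\to+\infty$ as $\gamma\to0^+$ (the $-\log\gamma$ term dominates) and $\phi(\gamma)=(p-\frac{NK_P}{K})\log\gamma+\sum_i\log\sigma_i^2+o(1)\to+\infty$ as $\gamma\to+\infty$ (since $p-\frac{NK_P}{K}>0$).

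The remaining two cases use the same expansion. If $\frac{NK_P}{K}=p$, then $g(\gamma)<p=\frac{NK_P}{K}$ for all $\gamma$, so $\phi'<0$ throughout, $\phi$ is strictly decreasing, and $\phi(\gamma)\to\sum_{i=N-m_1+1}^{N-r}\log\sigma_i^2$, a finite limit approached from above and not attained; hence $\inf f_1=\prod_{i=N-m_1+1}^{N-r}\sigma_i^2>0$ and is not achieved. If $\frac{NK_P}{K}>p$, again $\phi$ is strictly decreasing, but now $(p-\frac{NK_P}{K})\log\gamma\to-\infty$, so $\phi(\gamma)\to-\infty$ and $\inf f_1=0$, not achieved.

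I do not foresee a genuine obstacle here: the only points requiring care are the index bookkeeping that fixes the supremum of $g$ at $m_1-r$ (i.e.\ confirming exactly $m_1-r$ strictly positive eigenvalues occur in the product), and the endpoint asymptotics of $\phi$ that separate the borderline case $\frac{NK_P}{K}=m_1-r$ from the degenerate one. As a consistency check one may also observe that rewriting $\det(\frac{1}{\gamma}\bI_{K_P}+\bM)=\gamma^{-t}\prod_k(1+\gamma\lambda_k)$ over the $t$ nonzero eigenvalues turns the function $f$ of Theorem~\ref{theorem_minimization_over_gamma_FH_PHE} into $\gamma^{-NK_P/K}\prod_k(1+\gamma\lambda_k)$, so Corollary~\ref{corollary1_minimization_over_gamma_FH_PHE} is precisely that theorem with $t\mapsto m_1-r$.
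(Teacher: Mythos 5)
Your proof is correct and follows essentially the same route the paper takes, namely the sign analysis of the logarithmic derivative used in the proof of Theorem~\ref{theorem_minimization_over_gamma_FH_PHE} (to which the paper simply defers), with the correct bookkeeping that exactly $m_1-r$ strictly positive eigenvalues $\sigma_i^2$ enter the product. Only your closing consistency remark has a harmless slip: over the $t$ nonzero eigenvalues one has $\det\left(\frac{1}{\gamma}\bI_{K_P}+\bM\right)=\gamma^{-K_P}\prod_k\left(1+\gamma\lambda_k\right)$, not $\gamma^{-t}\prod_k\left(1+\gamma\lambda_k\right)$, although the resulting form $f(\gamma)=\gamma^{-NK_P/K}\prod_k\left(1+\gamma\lambda_k\right)$ and the rest of your argument are unaffected.
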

 \bigskip
 
 \begin{corollary}
 \label{corollary2_minimization_over_gamma_FH_PHE}
 The function
 $$
 f_2(\gamma)=\gamma^{\frac{-K_PN}{K}}
 \prod_{i=N-m_1+1}^{N} \left( 1 + \gamma \sigma^2_i \right)
 $$
 attains its absolute minimum over $(0,+\infty)$
 at the unique positive solution of
\be
\sum_{i=N-m_1+1}^{N}
\frac{\sigma^2_i \gamma}{\sigma^2_i \gamma +1}=
\frac{NK_P}{K}
\label{eq_min_f_2_gamma}
\ee
provided that $m_1 > \frac{NK_P}{K}$.
If $m_1 = \frac{NK_P}{K}$, then $f_2(\gamma)$ does not possess the absolute minimum over $(0, +\infty)$, but its infimum is positive; finally, if
$m_1 < \frac{NK_P}{K}$, the infimum of $f_2(\gamma)$ over $(0, +\infty)$ is zero.
\end{corollary}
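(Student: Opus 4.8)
The plan is to observe that, up to the naming of the exponent, $f_2$ is exactly a function of the type treated in Theorem~\ref{theorem_minimization_over_gamma_FH_PHE}, and then to run the short one-variable calculus argument that underlies that theorem. Indeed, writing $\det(\tfrac1\gamma\bI_{K_P}+\bM)=\gamma^{-K_P}\prod_{k=1}^{K_P}(1+\gamma\lambda_k)$ and discarding the unit factors coming from the $K_P-t$ vanishing eigenvalues, \eqref{f_gamma} becomes $\gamma^{-NK_P/K}\prod_{k=K_P-t+1}^{K_P}(1+\gamma\lambda_k)$, which is $f_2$ with $t=m_1$ and the $\lambda_k$'s replaced by the nonzero eigenvalues $\sigma_i^2$, $i=N-m_1+1,\dots,N$, of $\bS_S^{-1/2}\bZ_P\bZ_P^{\dag}\bS_S^{-1/2}$. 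So one could simply cite Theorem~\ref{theorem_minimization_over_gamma_FH_PHE}; below I sketch the self-contained version.

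First I would pass to logarithms: $L(\gamma):=\log f_2(\gamma)=-\tfrac{NK_P}{K}\log\gamma+\sum_{i=N-m_1+1}^{N}\log(1+\gamma\sigma_i^2)$ is smooth on $(0,\infty)$, with
\[
L'(\gamma)=\frac1\gamma\Big(\phi(\gamma)-\tfrac{NK_P}{K}\Big),\qquad \phi(\gamma):=\sum_{i=N-m_1+1}^{N}\frac{\gamma\sigma_i^2}{1+\gamma\sigma_i^2}.
\]
By the Lemma preceding Theorem~\ref{TheoremRank02}, the matrix $\bS_S^{-1/2}\bZ_P$ has rank $m_1=\min(N,K_P)$ with probability one, so the $m_1$ eigenvalues appearing in $\phi$ are strictly positive; hence each summand of $\phi$ is strictly increasing and continuous, and $\phi$ is a strictly increasing continuous bijection of $(0,\infty)$ onto $(0,m_1)$.

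Then I would dispatch the three cases by comparing $NK_P/K$ with $m_1$. If $NK_P/K<m_1$, equation \eqref{eq_min_f_2_gamma} is $\phi(\gamma)=NK_P/K$ and has a unique root $\gamma^\star$; strict monotonicity of $\phi$ makes $L'$ negative on $(0,\gamma^\star)$ and positive on $(\gamma^\star,\infty)$, so $\gamma^\star$ is the unique global minimizer. If $NK_P/K=m_1$, then $\phi(\gamma)<m_1$ for every finite $\gamma$, so $L'<0$ throughout and $f_2$ is strictly decreasing; since $f_2(\gamma)\sim\big(\prod_i\sigma_i^2\big)\,\gamma^{\,m_1-NK_P/K}$ as $\gamma\to\infty$ and the exponent now vanishes, the infimum equals $\prod_i\sigma_i^2>0$ and is not attained. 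If $NK_P/K>m_1$, the same asymptotics have a negative exponent, so $f_2(\gamma)\to0$ and the infimum is $0$.

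There is no genuine obstacle; the only points deserving care are (i) the large-$\gamma$ asymptotics, which is what distinguishes the ``positive but unattained infimum'' case from the ``zero infimum'' case, and (ii) the strict positivity of the $\sigma_i^2$ in the relevant range --- supplied by the rank Lemma --- which is exactly what promotes the stationary point to a unique global minimizer rather than a mere critical point. The proof of Corollary~\ref{corollary1_minimization_over_gamma_FH_PHE} is word-for-word the same with $m_1$ replaced by $m_1-r$ (and the sum and product running from $N-m_1+1$ to $N-r$).
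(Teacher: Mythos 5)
Your proposal is correct and follows essentially the same route as the paper, which simply says to proceed as in the proof of Theorem~\ref{theorem_minimization_over_gamma_FH_PHE}: you correctly note that $f_2$ coincides with the function \eqref{f_gamma} evaluated at $\bM_0=\bZ_P^{\dag}\bS_S^{-1}\bZ_P$ (whose nonzero eigenvalues are the $\sigma_i^2$, $i=N-m_1+1,\dots,N$, positive w.p.~one by the rank lemma), and your self-contained argument via $L'(\gamma)=\gamma^{-1}\bigl(\phi(\gamma)-NK_P/K\bigr)$ with $\phi$ increasing onto $(0,m_1)$, together with the large-$\gamma$ asymptotics, correctly separates the three cases.
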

\bigskip

It follows that under the condition $m_1 > \frac{NK_P}{K}+r$
the GLRT, referred to in the following as 
first-order unknown subspace in partially-homogeneous environment 
(FO-US-PHE) detector, can be written as 
\be
\frac{\widehat{\gamma}_0^{N\left(1-\frac{K_P}{K}\right)} 
\prod_{i=1}^{N} \left( \frac{1}{\widehat{\gamma}_0} + \sigma^2_i  \right)
}{
\widehat{\gamma}_1^{N\left(1-\frac{K_P}{K}\right)-r} 
\prod_{i=1}^{N-r} \left( \frac{1}{\widehat{\gamma}_1} + \sigma^2_i  \right)
}
\test \eta
\label{eq:1S-FO-GLRT-UH-PHE}
\ee
where $\widehat{\gamma}_0$ and $\widehat{\gamma}_1$ can be computed using 
Corollary \ref{corollary1_minimization_over_gamma_FH_PHE}
and \ref{corollary2_minimization_over_gamma_FH_PHE}, respectively. The detector is a function of the eigenvalues of the statistic $\bT_P$.

\section{Second-order detectors: derivations}
The joint PDF of primary and secondary data is given by
\begin{align}
f_1( \bZ ; \bR, \bR_s, \bH, \gamma) &=
\frac{\etr{-\frac{1}{\gamma} \bR^{-1} 
\bZ_S \bZ_S^{\dag}}}{\pi^{NK} \gamma^{NK_S}} \nonumber
\\ &\times
\frac{\etr{-\left(\bH \bR_s \bH^{\dag} + \bR \right)^{-1}   
\bZ_P \bZ_P^{\dag}}}
{\det^{K_P}(\bH \bR_s \bH^\dag + \bR)\det^{K_S}(\bR)}
\label{eqn:f1_SO}
\end{align}
under $H_1$ and is expressed by eq. (\ref{eq:PDF_H0}) under $H_0$.
The compressed likelihood under $H_0$  has already been computed to implement the GLRTs for first-order models.

\subsection{Unknown subspace $\langle \bH \rangle$ of known dimension, known  $\gamma$}
In eq. \eqref{eqn:f1_SO} the parameters
$\bH$ and $\bR_s$ are both unknown, so it is convenient to replace them by  $\tilde{\bR}_s=\bH \bR_s \bH^{\dag}$. Thus, the log-likelihood under $H_1$ can be written as
\begin{align}
\label{eq:log_likelihood_H1_1}
&L_1(\bR, \tilde{\bR}_s, \gamma;  \bZ) =
-NK \log \pi -NK_S \log \gamma \nonumber 
\\ 
\nonumber &- K_P\log\det(\tilde{\bR}_s + \bR)
-\tr\left[
\left( \tilde{\bR}_s+\bR\right)^{-1}
\bS_P
\right] 
\\ 
&- K_S\log\det(\bR)
-\tr\left[
\frac{1}{\gamma} \bR^{-1}
\bS_S
\right]
\end{align}
where we recall that
$\bS_P = 
\bZ_P \bZ_P^{\dag}$ and
$\bS_S =  \bZ_S \bZ_S^{\dag}$ (and the matrix $\bS_S$ is positive definite since $K_S \geq N$). Notice also that the rank of the matrix $\tilde{\bR}_s$ is less than or equal to $r$ 
(in fact, the rank of $\bH \bR_s^{1/2}$ is less than or equal to $r$). 
Now, consider the eigendecomposition of 
\be
\bR^{-1/2}(\tilde{\bR}_{s}+\bR)\bR^{-1/2}=\bU\bLambda\bU^\dag
\ee
where $\bLambda=\diag(\lambda_1,\ldots,\lambda_N) \in\R^{N\times N}$, $\lambda_1\geq\ldots\geq \lambda_N\geq 1$, is a diagonal matrix 
containing the eigenvalues\footnote{The $\lambda_i$s are greater than or equal to one since $\bR^{-1/2}(\tilde{\bR}_{s}+\bR)\bR^{-1/2}=\bI_N+\bR^{-1/2}\tilde{\bR}_{s}\bR^{-1/2}$.} of $\bR^{-1/2}(\tilde{\bR}_{s}+\bR)\bR^{-1/2}$ and
$\bU\in\C^{N\times N}$ is the unitary matrix of the corresponding eigenvectors. 
Define
$\bM=\bR^{1/2}\bU\in\C^{N\times N}$. Notice that the matrix $\bM$ can be any $N \times N$ non-singular matrix; in fact, from the eigendecomposition of the non-singular matrix $\bR^{1/2}$, namely 
$\bR^{1/2}=\bW_1 \bSigma \bW_1^{\dag}$, it follows that $\bM=\bW_1 \bSigma \bW_1^{\dag}\bU=\bW_1 \bSigma \bW_2^{\dag}$. In addition, we obtain that
\be
\bR=\bM\bM^\dag \quad \mbox{and} \quad \bR+\tilde{\bR}_s=\bM\bLambda\bM^\dag
\label{eq:R&RstildevsM&Lambda}
\ee
and \eqref{eq:log_likelihood_H1_1} can be recast as
\begin{align}
\nonumber
&L_1(\bR, \tilde{\bR}_s, \gamma; \bZ) =
-NK \log \pi -NK_S \log \gamma 
\\ 
\nonumber & -2K_P\log |\det \bM| -K_P\log\det \bLambda - 2K_S\log |\det \bM |
\\ 
&-
\tr\left[
\bLambda^{-1} \bM^{-1} \bS_P \bM^{-\dag}
\right] - \frac{1}{\gamma}\tr\left[
 \bM^{-1}
\bS_S \bM^{-\dag}
\right]
\label{eq:log_likelihood_H1_2}
\end{align}
where we have used the facts that 
$\det\bM^{\dag}=\left(\det(\bM)\right)^*$
and $\tr(\bA \bB)=\tr(\bB \bA)$, see for instance \cite{MatrixAnalysis}.
Moreover, denote by $\bGamma=\diag(\gamma_1,\ldots,\gamma_N)\in\R^{N\times N}$, $\gamma_1\geq\ldots\geq\gamma_N\geq 0$, the diagonal matrix 
containing the eigenvalues of $\bS_S^{-1/2}\bS_P\bS_S^{-1/2}$ and by $\bV\in\C^{N\times N}$ the unitary 
matrix of the corresponding eigenvectors.
Define $\bK=\bS_S^{1/2}\bV
\in\C^{N\times N}$; it turns out that 
$\bS_S=\bK\bK^\dag$ and $\bS_P=\bK\bGamma\bK^\dag$. Thus, we can rewrite
eq.  \eqref{eq:log_likelihood_H1_2} as
\begin{align}
\nonumber
&L_1(\bR, \tilde{\bR}_s, \gamma; \bZ)
=-NK \log \pi -NK_S \log \gamma 
\\ 
\nonumber &-2K_P\log |\det \bM| 
-K_P\log\det \bLambda -2K_S\log |\det \bM| 
\\ 
\nonumber &- \tr\left[
 \bLambda^{-1} \bM^{-1} \bK\bGamma\bK^\dag \bM^{-\dag}
\right] -\frac{1}{\gamma}\tr\left[
\bM^{-1}\bK\bK^\dag\bM^{-\dag}
\right] 
\\ 
\nonumber
&=-NK \log \pi -NK_S \log \gamma 
-2K\log |\det \bM|   
\\ 
& -K_P\log\det \bLambda - \tr\left[
  \bX \bGamma \bX^\dag
\right] 
-\frac{1}{\gamma}\tr\left[\bLambda^{1/2}
\bX  \bX^\dag \bLambda^{1/2}
\right]
\label{eq:log_likelihood_H1_3}
\end{align}
where $\bX=\bLambda^{-1/2}\bM^{-1}\bK\in\C^{N\times N}$ and we recall that $\bK$ and $\bGamma$ are known. Before going further, we also observe that we can maximize over $\bX$ for any given $\bLambda$ since $\bX$, given $\bLambda$, is completely specified by $\bM$. Let us proceed by replacing $\bX$ 
in \eqref{eq:log_likelihood_H1_3} with its
singular value decomposition given by $\bX=\bT\bD\bQ$, where $\bT,\bQ\in\C^{N\times N}$ are
unitary matrices and\footnote{Notice that the singular values of $\bX$ have been arranged in ascending 
order and not, as customary, in descending order \cite{MatrixAnalysis}.} $\bD=\diag(d_1,\ldots,d_N)\in\R^{N\times N}$, $0 <d_1\leq\ldots\leq d_N$. 
Since
\begin{align*}
\log |\det(\bM)| &=\log |\det(\bK)|
-\log |\det(\bX)| -\frac{1}{2}\log\det(\bLambda)
\\ &=
\log |\det(\bK)|
-\log\det(\bD) -\frac{1}{2}\log\det(\bLambda)
\end{align*}
we obtain 
\begin{align}
\nonumber
&L_1(\bR, \tilde{\bR}_s, \gamma; \bZ)
=-NK \log \pi -NK_S \log \gamma 
\\ 
\nonumber &- 2K \log |\det(\bK) |
+ 2K
\log\det(\bD)+ K_S \log\det(\bLambda)
\\ 
& - \tr\left[
 \bGamma \bQ^\dag \bD^2 \bQ 
\right] 
- \frac{1}{\gamma} \tr\left[
\bLambda \bT \bD^2 \bT^\dag
\right].
\label{eq:log_likelihood_H1_4}
\end{align}
Maximization with respect to $\bR$ and $\tilde{\bR}_s$ is tantamount to maximizing with respect to
$\bLambda, \bT, \bD, \bQ$. 
Therefore, we have that
\begin{align}
\nonumber
&L_1(\widehat{\bR}, \widehat{\tilde{\bR}}_s, \gamma; \bZ)
=-NK \log \pi -NK_S \log \gamma 
\\ \nonumber &- 2K\log |\det(\bK) |+ 
\dmax_{\bLambda, \bT, \bD, \bQ}
\Big\{ 2K\log\det(\bD) 
\\ 
& +K_S\log\det(\bLambda)
- \tr\left[
 \bGamma \bQ^\dag \bD^2 \bQ 
\right] 
- \frac{1}{\gamma} \tr\left[
\bLambda \bT \bD^2 \bT^\dag
\right]
\Big\}.
\label{eq:log_likelihood_H1_5}
\end{align}
Exploiting {\em Theorem 1} of \cite{mirsky1959trace}, it turns out that
\begin{align*}
\dmin_{\bQ}
\tr\left[
 \bGamma \bQ^\dag \bD^2 \bQ 
\right] 
&+ \frac{1}{\gamma} \dmin_{\bT} \tr\left[
\bLambda \bT \bD^2 \bT^\dag
\right]
\\ &=
\tr\left[
 \left(\bGamma + \frac{1}{\gamma} \bLambda \right) \bD^2 
\right]
\end{align*}
where the maximizers are $\widehat{\bT}=\bI_N e^{j\theta_T}$, $\forall\theta_T\in[0, 2\pi]$, 
$\widehat{\bQ}=\bI_N e^{j\theta_Q}$, $\forall\theta_Q\in[0, 2\pi]$, and hence
\begin{align}
\nonumber
&L_1(\widehat{\bR}, \widehat{\tilde{\bR}}_s, \gamma; \bZ)
=-NK \log \pi -NK_S \log \gamma 
\\  
\nonumber &- 2K \log |\det(\bK) |+ 
\dmax_{\bLambda, \bD}
\bigg\{ 2K\log\det(\bD) + K_S\log\det(\bLambda)
\\  \nonumber & -\tr\left[
 \left(\bGamma + \frac{1}{\gamma} \bLambda \right) \bD^2 
\right]
\bigg\}=-NK \log \pi -NK_S \log \gamma 
\\  \nonumber &- 2K \log |\det(\bK) | + \dmax_{\bLambda, \bD}
\left[\ K\sum_{i=1}^N\log d_i^2
+ K_S \sum_{i=1}^N\log \lambda_i \right.
\\  &\left. -\sum_{i=1}^N d_i^2 
\left( \gamma_i + \frac{1}{\gamma} \lambda_i \right)
\right].
\label{eq:log_likelihood_H1_5}
\end{align}
Now we compute the maximum with respect to
$d^2_i$, given $\lambda_i$, of the function
$$
g_i(d_i^2)=
K \log d_i^2
+ K_S \log \lambda_i 
- d_i^2 \left( \gamma_i + \frac{1}{\gamma} \lambda_i \right).
$$
First notice that 
$g_i$ tends to $-\infty$ as $d_i^2$ tends to zero or to $+\infty$; moreover,
its derivative  with respect to
$d^2_i$ is positive iff
$$
\frac{K}{d_i^2} - \left( \gamma_i + \frac{1}{\gamma} \lambda_i \right)  > 0
$$
implying that the maximizer (given $\lambda_i$) is
$d_i^2= \frac{\gamma K}{\gamma \gamma_i + \lambda_i}$.
It follows that eq. (\ref{eq:log_likelihood_H1_5}) 
yields
\begin{align}
\nonumber
&L_1(\widehat{\bR}, \widehat{\tilde{\bR}}_s, \gamma; \bZ)
=-NK \log \pi -NK_S \log \gamma 
- 2K \log |\det(\bK) |
\\ 
&+ \dmax_{\bLambda}
\left[\ K\sum_{i=1}^N\log \frac{\gamma K}{\gamma \gamma_i + \lambda_i} + K_S \sum_{i=1}^N\log \lambda_i 
-NK
\right].
\label{eq:log_likelihood_H1_6}
\end{align}
Now observe that even though the rank of $\bH$ is 
known (and equal to $r$) that of $\tilde{\bR}_s$ is unknown (less than or equal to $r$).
Thus, maximization over $\bLambda$ of eq. (\ref{eq:log_likelihood_H1_6}) is limited to $N \times N$ diagonal matrices 
with (at most) $r$ eigenvalues greater than one and the remaining equal to one. Hereafter we focus on the case $r \leq K_P \leq N$ although extension to $K_P<r$ is straighforward. 

\begin{theorem}
\label{TheoremSO_US_HE}
Let $r \leq K_P \leq N$. Eq. (\ref{eq:log_likelihood_H1_6}) can be rewritten as 
\begin{align}
\nonumber
&L_1(\widehat{\bR}, \widehat{\tilde{\bR}}_s, \gamma; \bZ)
=-NK \log \pi -NK_S \log \gamma -NK
\\ \nonumber &- 2K \log |\det(\bK)| +
\sum_{i=1}^{r-1} K \log \frac{\gamma K}{\gamma \gamma_i + \widehat{\lambda}_i(\gamma)}
\\ &+  \sum_{i=1}^{r-1} K_S \log \widehat{\lambda}_i (\gamma)
+ \sum_{i={r}}^{N} K \log \frac{\gamma K}{\gamma \gamma_{i} + 1}
\label{eq:log_likelihood_H1_theoremSO_US_HE_1}
\end{align}
with
$
\widehat{\lambda}_i=\max \left(\frac{K_S \gamma \gamma_i}{K_P},1 \right)$, $i=1,\ldots,r-1$,
if $\gamma < \frac{K_P}{K_S}\frac{1}{\gamma_{r}}$,
and as
\begin{align}
\nonumber
&L_1(\widehat{\bR}, \widehat{\tilde{\bR}}_s, \gamma; \bZ)
=-NK \log \pi -NK_S \log \gamma -NK
\\ \nonumber &- 2K \log |\det(\bK)| +
\sum_{i={1}}^{r} \left[ K \log \frac{K_P}{\gamma_i}+ K_S \log \frac{K_S \gamma \gamma_i}{K_P} \right]
\\ &+
 \sum_{i={r+1}}^{N}K \log \frac{\gamma K}{\gamma \gamma_{i} + 1}
\label{eq:log_likelihood_H1_theoremSO_US_HE_2}
\end{align}
otherwise.
\end{theorem}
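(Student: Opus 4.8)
\emph{Proof strategy.} The plan is to carry out the remaining maximization over $\bLambda$ in eq.~\eqref{eq:log_likelihood_H1_6} coordinate by coordinate and then allocate the rank budget $r$ greedily. The bracketed quantity to be maximized equals $-NK+\sum_{i=1}^N h_i(\lambda_i)$ with $h_i(\lambda)=K\log\frac{\gamma K}{\gamma\gamma_i+\lambda}+K_S\log\lambda$, and, as noted just before the statement, $\bLambda$ ranges over the diagonal matrices with $\lambda_1\ge\cdots\ge\lambda_N\ge 1$ of which at most $r$ entries exceed one. It is harmless to drop the ordering constraint $\lambda_1\ge\cdots\ge\lambda_N$, since the optimizer exhibited below satisfies it automatically; thus the problem becomes: choose $\lambda_i\in[1,\infty)$, at most $r$ of them strictly bigger than one, so as to maximize $\sum_i h_i(\lambda_i)$.

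First I would maximize each $h_i$ on $[1,\infty)$. One computes $h_i'(\lambda)=\dfrac{K_S\gamma\gamma_i-K_P\lambda}{\lambda(\gamma\gamma_i+\lambda)}$, so $h_i$ increases then decreases on $(0,\infty)$ with interior maximizer $\lambda_i^\star=\frac{K_S\gamma\gamma_i}{K_P}$; hence its maximizer over $[1,\infty)$ is $\widehat\lambda_i=\max(\lambda_i^\star,1)$, matching the statement. Let $\Delta_i=h_i(\widehat\lambda_i)-h_i(1)\ge 0$ be the gain obtained by allowing $\lambda_i$ to exceed one; clearly $\Delta_i=0$ precisely when $\lambda_i^\star\le 1$, i.e. $\gamma\gamma_i\le K_P/K_S$. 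When $\lambda_i^\star>1$, substituting $\widehat\lambda_i=\lambda_i^\star$ and writing $u=\gamma\gamma_i$ shows that $\Delta_i$ equals $-K_P\log u+K\log(1+u)$ up to an additive constant independent of $u$, with $u$-derivative $\frac{K_Su-K_P}{u(u+1)}>0$ for $u>K_P/K_S$. Hence $\Delta_i$ is a continuous, nondecreasing function of $\gamma_i$, and since $\gamma_1\ge\cdots\ge\gamma_N$ one gets $\Delta_1\ge\cdots\ge\Delta_N\ge 0$.

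Next comes the allocation. Write $\sum_i h_i(\lambda_i)=\sum_i h_i(1)+\sum_i\bigl(h_i(\lambda_i)-h_i(1)\bigr)$; for any feasible $\bLambda$ the increments vanish wherever $\lambda_i=1$ and are at most $\Delta_i$ on the set $A=\{i:\lambda_i>1\}$, which has $|A|\le r$. Therefore $\sum_i h_i(\lambda_i)\le \sum_i h_i(1)+\sum_{i\in A}\Delta_i\le \sum_i h_i(1)+\sum_{i=1}^{m}\Delta_i$, where $m=\min\!\bigl(r,\#\{i:\gamma\gamma_i>K_P/K_S\}\bigr)$ and the last inequality uses that the $\Delta_i$ are nonnegative and nonincreasing in $i$. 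This bound is attained by $\lambda_i=\widehat\lambda_i$ for $i\le m$ and $\lambda_i=1$ for $i>m$, which is feasible: $\widehat\lambda_1\ge\cdots\ge\widehat\lambda_m\ge 1$, so the discarded ordering constraint is recovered for free. Since activating an index with zero gain changes nothing, the maximizer may equally be described as $\lambda_i=\widehat\lambda_i$ for $i$ up to $r-1$ (resp. $r$) and $\lambda_i=1$ afterwards, depending on the case.

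Finally, the case split. If $\gamma<\frac{K_P}{K_S\gamma_r}$ then $\gamma\gamma_i\le\gamma\gamma_r<K_P/K_S$ for all $i\ge r$, so $\widehat\lambda_i=1$ there and $m\le r-1$; taking $\lambda_i=\widehat\lambda_i=\max\!\left(\frac{K_S\gamma\gamma_i}{K_P},1\right)$ for $i=1,\dots,r-1$ and $\lambda_i=1$ for $i\ge r$ and substituting into eq.~\eqref{eq:log_likelihood_H1_6} gives eq.~\eqref{eq:log_likelihood_H1_theoremSO_US_HE_1}. Otherwise $\gamma\gamma_i\ge\gamma\gamma_r\ge K_P/K_S$ for $i\le r$, hence $m=r$; taking $\lambda_i=\frac{K_S\gamma\gamma_i}{K_P}$ for $i=1,\dots,r$ and $\lambda_i=1$ for $i>r$, and noting $\gamma\gamma_i+\lambda_i=\frac{\gamma\gamma_i K}{K_P}$ so that $K\log\frac{\gamma K}{\gamma\gamma_i+\lambda_i}=K\log\frac{K_P}{\gamma_i}$, substitution gives eq.~\eqref{eq:log_likelihood_H1_theoremSO_US_HE_2}; at the boundary $\gamma=\frac{K_P}{K_S\gamma_r}$ both expressions agree because then $\widehat\lambda_r=1$. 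I expect the allocation step to be the main obstacle — specifically, proving that the gains $\Delta_i$ are ordered like the $\gamma_i$ (through the short monotonicity computation above), which is exactly what makes the greedy choice of raising the $r$ coordinates with the largest $\gamma_i$ optimal and, at the same time, automatically compatible with $\lambda_1\ge\cdots\ge\lambda_N$.
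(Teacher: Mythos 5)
Your proof is correct, and its core step — the coordinate-wise maximization of $h_i(\lambda_i)=K\log\frac{\gamma K}{\gamma\gamma_i+\lambda_i}+K_S\log\lambda_i$ over $[1,\infty)$, yielding $\widehat\lambda_i=\max\left(\frac{K_S\gamma\gamma_i}{K_P},1\right)$ — is exactly the paper's. Where you genuinely differ is in how the rank constraint (at most $r$ entries of $\bLambda$ exceeding one) is enforced. The paper proceeds by an interval-by-interval case analysis of $\gamma$ against the thresholds $\frac{K_P}{K_S}\frac{1}{\gamma_j}$, and in the regime $K_P>r$ with $\gamma$ large (where more than $r$ unconstrained per-coordinate maximizers would exceed one) it merely asserts that the cap of $r$ together with the descending order of the $\lambda_i$s forces the optimum to raise only the first $r$ coordinates. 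You instead prove this allocation step: you introduce the gains $\Delta_i=h_i(\widehat\lambda_i)-h_i(1)$, verify that $\Delta$ is a continuous nondecreasing function of $u=\gamma\gamma_i$ (vanishing for $u\le K_P/K_S$), conclude $\Delta_1\ge\cdots\ge\Delta_N\ge 0$, and then bound any feasible configuration by the greedy one that activates the $r$ largest $\gamma_i$; this also cleanly justifies relaxing the ordering constraint on $\bLambda$, since the relaxed optimum is ordered. The case split ($\gamma<\frac{K_P}{K_S}\frac{1}{\gamma_r}$ versus otherwise), the substitution identity $\gamma\gamma_i+\widehat\lambda_i=\frac{\gamma\gamma_i K}{K_P}$ giving the $K\log\frac{K_P}{\gamma_i}$ terms, and the boundary agreement at $\gamma=\frac{K_P}{K_S}\frac{1}{\gamma_r}$ all match the paper's computations, so your argument both reproduces the theorem and supplies a rigorous justification of a step the paper handles only with a brief remark.
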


\begin{proof}
See Appendix \ref{App:ProofTheoremSO_US_HE}.
\end{proof}

Obviously, if $r= N$ 
the compressed likelihood under $H_1$ is given from
(\ref{eq:log_likelihood_H1_theoremSO_US_HE_1}) by removing the last term and varying the summation indices from $1$ to $N$.
Notice also that 
$K_P \geq h$  is a necessary condition to  estimate 
the eigenvalues of a matrix $\tilde{\bR}_s$ with rank $h$.

Part of the above results, useful for subsequent derivations, is now restated in the form of 
theorem.
In particular, the following theorem summarizes the achieved findings in the case that the rank of $\tilde{\bR}_s$ is unknown.

\begin{theorem}
\label{eq:ML_estimate_two_cov}
Let $\bS_P \in \C^{N \times N}$ be a positive semidefinite matrix and
$\bS_S \in \C^{N \times N}$ a positive definite matrix. Then,
the function
\begin{align*}
h( \bR, \tilde{\bR}_s, \gamma) &=-N K_S \log \gamma
-K_P\log\det(\tilde{\bR}_s + \bR)
\\ &-K_S\log\det(\bR)
\\ &-\tr\left[
\left( \tilde{\bR}_s+\bR\right)^{-1}
\bS_P
\right] 
-\frac{1}{\gamma} \tr\left[
\bR^{-1}
\bS_S
\right]
\end{align*}
for any $\gamma >0$  admits its maximum over the set of positive definite matrices $\bR \in \C^{N \times N}$ and positive semidefinite matrices $\tilde{\bR}_s \in \C^{N \times N}$ with unknown rank at 
$$
\widehat{\bR}=\widehat{\bM} \widehat{\bM}^\dag \quad \mbox{and} \quad \widehat{\tilde{\bR}}_s=\widehat{\bM} \widehat{\bLambda} \widehat{\bM}^\dag
-\widehat{\bR}
$$
where
$\widehat{\bLambda}=\diag(\widehat{\lambda}_1,\ldots,\widehat{\lambda}_N) \in\R^{N\times N}$, 
with
$$
\widehat{\lambda}_i(\gamma)=\max\left(\frac{K_S}{K_P}\gamma \gamma_i ,1\right), \quad i=1,\ldots,N,
$$ 
and
$\gamma_1\geq\ldots\geq\gamma_N \geq 0$ the eigenvalues of $\bS_S^{-1/2} \bS_P \bS_S^{-1/2}$
(define also
the unitary matrix $\bV\in\C^{N\times N}$ of the corresponding eigenvectors
of $\bS_S^{-1/2} \bS_P \bS_S^{-1/2}$)
while
$$
\widehat{\bM}=
\bK  \widehat{\bD^2}^{-1/2} \widehat{\bLambda}^{-1/2}
$$
with 
$\widehat{\bD^2}=\diag\left(\widehat{d^2}_1,\ldots,\widehat{d^2}_N\right)\in\R^{N\times N}$,
$$
\widehat{d_i}^2(\gamma)= \frac{\gamma K}{\gamma \gamma_i + \widehat{\lambda}_i(\gamma)}, \quad i=1,\ldots,N,
$$
and
$\bK=\bS_S^{1/2}\bV
\in\C^{N\times N}$.
The maximum is given by
\begin{align}
\nonumber
h(\widehat{\bR}, \widehat{\tilde{\bR}}_s, \gamma)
&=-NK_S \log \gamma 
- 2K \log |\det(\bK) | -NK
\\ &+ 
 \ K\sum_{i=1}^N\log \frac{\gamma K}{\gamma \gamma_i + \widehat{\lambda}_i(\gamma)}
+ K_S \sum_{i=1}^N\log \widehat{\lambda}_i (\gamma)
\label{eq:maximum_h_given_gamma}
\end{align}
with $K=K_P+K_S$.
\medskip
\end{theorem}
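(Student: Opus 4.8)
The plan is to recognize that this theorem merely repackages, without the rank constraint, the maximization already performed in the passage from \eqref{eq:log_likelihood_H1_1} to \eqref{eq:log_likelihood_H1_6}. Writing $\tilde{\bR}_s$ in place of $\bH\bR_s\bH^\dag$, one has $h(\bR,\tilde{\bR}_s,\gamma)=NK\log\pi+L_1(\bR,\tilde{\bR}_s,\gamma;\bZ)$, so for fixed $\gamma$ maximizing $h$ over positive definite $\bR$ and positive semidefinite $\tilde{\bR}_s$ of any rank is precisely that earlier computation. First I would replay its substitutions: set $\bR^{-1/2}(\tilde{\bR}_s+\bR)\bR^{-1/2}=\bU\bLambda\bU^\dag$ with $\lambda_1\ge\cdots\ge\lambda_N\ge1$, put $\bM=\bR^{1/2}\bU$, which ranges over all nonsingular matrices and satisfies $\bR=\bM\bM^\dag$ and $\bR+\tilde{\bR}_s=\bM\bLambda\bM^\dag$; introduce $\bK=\bS_S^{1/2}\bV$ from the eigendecomposition $\bS_S^{-1/2}\bS_P\bS_S^{-1/2}=\bV\bGamma\bV^\dag$ with $\gamma_1\ge\cdots\ge\gamma_N\ge0$; and finally $\bX=\bLambda^{-1/2}\bM^{-1}\bK$ with singular value decomposition $\bX=\bT\bD\bQ$, $0<d_1\le\cdots\le d_N$. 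These steps turn $h$ into the form of \eqref{eq:log_likelihood_H1_4} up to the constant $NK\log\pi$.

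Next I would invoke Theorem~1 of \cite{mirsky1959trace} to minimize the trace terms $\tr[\bGamma\bQ^\dag\bD^2\bQ]$ and $\frac{1}{\gamma}\tr[\bLambda\bT\bD^2\bT^\dag]$ over the unitary factors $\bQ$ and $\bT$. Because $d_1^2\le\cdots\le d_N^2$ is arranged in ascending order while $\gamma_1\ge\cdots\ge\gamma_N$ and $\lambda_1\ge\cdots\ge\lambda_N$ are descending, both minima are attained at $\bQ=\bT=\bI_N$ (up to an irrelevant unit-modulus phase), and $h$ collapses to the fully separable expression $\sum_{i=1}^{N}\big[K\log d_i^2+K_S\log\lambda_i-d_i^2(\gamma_i+\frac{1}{\gamma}\lambda_i)\big]$ plus terms depending only on $\gamma$ and $\bK$. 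For fixed $\lambda_i$ each summand is strictly concave in $d_i^2$ and coercive, tending to $-\infty$ as $d_i^2\to 0$ or $d_i^2\to\infty$, with unique maximizer $\widehat{d_i}^2(\gamma)=\gamma K/(\gamma\gamma_i+\lambda_i)$; substituting this back collapses the $i$th term, up to the additive constant $-K$, to $\phi_i(\lambda_i)=K\log\frac{\gamma K}{\gamma\gamma_i+\lambda_i}+K_S\log\lambda_i$.

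The step that differs from Theorem~\ref{TheoremSO_US_HE} is the final one: with no rank cap on $\tilde{\bR}_s$ the maximization over $\bLambda$ decouples into $N$ independent one-dimensional problems $\max_{\lambda_i\ge1}\phi_i(\lambda_i)$. I would solve each by observing that $\phi_i'(\lambda_i)$ has the same sign as $K_S\gamma\gamma_i-K_P\lambda_i$, so $\phi_i$ increases then decreases and its maximizer over $[1,\infty)$ is $\widehat{\lambda}_i(\gamma)=\max\!\big(\frac{K_S}{K_P}\gamma\gamma_i,\,1\big)$; since $\gamma_1\ge\cdots\ge\gamma_N$ this sequence is automatically nonincreasing and bounded below by $1$, hence an admissible diagonal $\bLambda$, so the componentwise optimum is the global one. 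Plugging $\widehat{d_i}^2$ and $\widehat{\lambda}_i$ into the separable objective yields \eqref{eq:maximum_h_given_gamma}, and unwinding the substitutions recovers the maximizers: $\bQ=\bT=\bI_N$ forces $\widehat{\bX}=\widehat{\bD}$, whence $\bM^{-1}\bK=\bLambda^{1/2}\bX$ gives $\widehat{\bM}=\bK\,\widehat{\bD^2}^{-1/2}\widehat{\bLambda}^{-1/2}$, then $\widehat{\bR}=\widehat{\bM}\widehat{\bM}^\dag$ and $\widehat{\tilde{\bR}}_s=\widehat{\bM}\widehat{\bLambda}\widehat{\bM}^\dag-\widehat{\bR}=\widehat{\bM}(\widehat{\bLambda}-\bI_N)\widehat{\bM}^\dag$, which is positive semidefinite precisely because $\widehat{\lambda}_i\ge1$ for every $i$. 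I expect the only real obstacle to be bookkeeping: keeping the ascending/descending conventions on the $d_i^2$, $\gamma_i$, $\lambda_i$ aligned so that Mirsky's inequality is tight at $\bQ=\bT=\bI_N$, and checking that the per-index $\bLambda$-optimizer is a feasible (nonincreasing, $\ge1$) diagonal matrix; the remaining work is the one-variable calculus sketched above.
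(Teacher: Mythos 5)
Your proposal is correct and follows essentially the same route as the paper: the theorem is indeed just the in-text derivation (change of variables $\bR=\bM\bM^\dag$, $\bR+\tilde{\bR}_s=\bM\bLambda\bM^\dag$, the SVD of $\bX=\bLambda^{-1/2}\bM^{-1}\bK$, Mirsky's trace theorem for the unitary factors, and the one-variable maximization over $d_i^2$) combined with the per-index maximization of $K\log\frac{\gamma K}{\gamma\gamma_i+\lambda_i}+K_S\log\lambda_i$ over $\lambda_i\geq 1$, which without a rank cap decouples and gives $\widehat{\lambda}_i=\max(K_S\gamma\gamma_i/K_P,1)$, exactly as in the paper's treatment. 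Your feasibility check (the componentwise optimizers respect the ordering conventions and yield $\widehat{\tilde{\bR}}_s=\widehat{\bM}(\widehat{\bLambda}-\bI_N)\widehat{\bM}^\dag\succeq 0$) is the same bookkeeping the paper leaves implicit.
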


The GLRT for homogeneous environment and unknown subspace $\langle \bH \rangle$, referred to in the following as 
second-order unknown subspace in homogeneous environment 
(SO-US-HE) detector,
is  
\be
L_1(\widehat{\bR}, \widehat{\tilde{\bR}}_s, 1; \bZ)
-L_0(\widehat{\bR},  1; \bZ)
\test \eta
\label{eq:1S-SO-GLRT-UK-HE}
\ee
with $L_0(\widehat{\bR},  1;  \bZ)$ given by the logarithm of eq. (\ref{eq:PDF_H0_FO_knownH}) (with $\gamma=1$).
\medskip

\subsection{Unknown subspace $\langle \bH \rangle$ of known dimension, unknown  $\gamma$}
To derive the GLRT for partially-homogeneous environment,
we have to maximize the partially-compressed likelihood  (under $H_1$), given by Theorem \ref{TheoremSO_US_HE},
also with respect to $\gamma$. Such maximization is summarized  
by the following theorem.
\bigskip

\begin{theorem}
\label{Theorem4}
Let $r < K_P \leq N$.
The maximum with respect to $\gamma$ of  the
partially-compressed likelihood, given by Theorem \ref{TheoremSO_US_HE},
is attained at the unique $\gamma \geq \frac{K_P}{K_S} \frac{1}{\gamma_r}$, say $\widehat{\gamma}$, solving
the equation
$$
\sum_{i=r+1}^{K_P} \frac{K}{\gamma \gamma_i+1}= (K_P-r)K_S- (N-K_P)K_P
$$
provided that $(K_P-r)K_S > (N-K_P)K_P$.
The compressed likelihood is obtained plugging $\widehat{\gamma}$ into eq. (\ref{eq:log_likelihood_H1_theoremSO_US_HE_2}).
\end{theorem}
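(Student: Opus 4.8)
The plan is to view the partially-compressed likelihood of Theorem~\ref{TheoremSO_US_HE} as a one-dimensional function of $\gamma>0$ and to locate its maximizer by the same monotonicity/concavity scheme used for Theorem~\ref{theorem_minimization_over_gamma_FH_PHE}. Since $r<K_P\le N$, the matrix $\bS_S^{-1/2}\bS_P\bS_S^{-1/2}$ has rank $K_P$, so $\gamma_1\ge\cdots\ge\gamma_{K_P}>0=\gamma_{K_P+1}=\cdots=\gamma_N$; in particular $\gamma_r>0$ and the breakpoint $\gamma_0:=\frac{K_P}{K_S}\frac{1}{\gamma_r}$ that separates \eqref{eq:log_likelihood_H1_theoremSO_US_HE_1} from \eqref{eq:log_likelihood_H1_theoremSO_US_HE_2} is well defined.

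I would start with the branch $\gamma\ge\gamma_0$. Dropping $\gamma$-independent constants and noting that for $i>K_P$ the summand in the last sum of \eqref{eq:log_likelihood_H1_theoremSO_US_HE_2} contributes only a $\log(\gamma K)$ term, a short bookkeeping of the $\log\gamma$ coefficients (which collapse to $(N-r)K_P$) shows that on $[\gamma_0,\infty)$ the likelihood equals a constant plus
$$\psi(\gamma)=(N-r)K_P\log\gamma-K\sum_{i=r+1}^{K_P}\log(\gamma\gamma_i+1).$$
Then $\gamma\psi'(\gamma)=(N-r)K_P-K\sum_{i=r+1}^{K_P}\frac{\gamma\gamma_i}{\gamma\gamma_i+1}=-c+K\sum_{i=r+1}^{K_P}\frac{1}{\gamma\gamma_i+1}$ with $c:=K(K_P-r)-(N-r)K_P=(K_P-r)K_S-(N-K_P)K_P$. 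The map $\gamma\mapsto\gamma\psi'(\gamma)$ is strictly decreasing, tends to $(N-r)K_P>0$ as $\gamma\to0^+$ and to $-c$ as $\gamma\to\infty$; hence \emph{precisely} when $c>0$ it has a unique zero $\widehat\gamma$, the function $\psi$ increases on $(0,\widehat\gamma)$ and decreases on $(\widehat\gamma,\infty)$, and the equation $\gamma\psi'(\gamma)=0$ reads $\sum_{i=r+1}^{K_P}\frac{K}{\gamma\gamma_i+1}=(K_P-r)K_S-(N-K_P)K_P$, as claimed. When $c\le0$ the profile is eventually increasing and no maximizer exists, which explains the proviso.

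Next I would verify $\widehat\gamma\ge\gamma_0$, so that the critical point sits in the admissible branch. Because $\gamma_i\le\gamma_r$ for $i\ge r+1$, one has $\gamma_0\gamma_i\le K_P/K_S$, hence $\gamma_0\gamma_i+1\le K/K_S$ and $\frac{K}{\gamma_0\gamma_i+1}\ge K_S$; summing over the $K_P-r$ indices gives $\sum_{i=r+1}^{K_P}\frac{K}{\gamma_0\gamma_i+1}\ge(K_P-r)K_S\ge c$, i.e. $\gamma_0\psi'(\gamma_0)\ge0$, so $\widehat\gamma\ge\gamma_0$. It then remains to rule out a larger value of the likelihood on $(0,\gamma_0)$, where \eqref{eq:log_likelihood_H1_theoremSO_US_HE_1} applies. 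That interval is itself partitioned by the breakpoints $K_P/(K_S\gamma_i)$ at which the $\max(\cdot,1)$ operations switch; on a sub-interval where exactly the top $p\le r-1$ eigenvalues $\widehat\lambda_i$ exceed one, the same reduction turns the likelihood into a constant plus $(N-p)K_P\log\gamma-K\sum_{i=p+1}^{K_P}\log(\gamma\gamma_i+1)$, whose $\gamma$-derivative times $\gamma$ equals $-c_p+K\sum_{i=p+1}^{K_P}\frac{1}{\gamma\gamma_i+1}$ with $c_p=(K_P-p)K_S-(N-K_P)K_P\ge c>0$. On that sub-interval $\widehat\lambda_{p+1}=1$ (or, for $p=r-1$, simply $\gamma<\gamma_0$), which forces $\gamma\gamma_i\le K_P/K_S$ for every $i\ge p+1$ and hence makes this derivative strictly positive; splicing the sub-intervals together (the likelihood is continuous) shows it is strictly increasing on $(0,\gamma_0)$. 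Combining the three facts, the global maximizer over $(0,\infty)$ is $\widehat\gamma$, and substituting it into \eqref{eq:log_likelihood_H1_theoremSO_US_HE_2} yields the compressed likelihood.

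The main obstacle I expect is the bookkeeping of the last step: the region $\gamma<\gamma_0$ is piecewise because of the rank-adaptive thresholding, so one must track which spikes are active on each sub-interval and verify strict monotonicity uniformly, leaning each time on the elementary inequality $\gamma\gamma_i\le K_P/K_S$ that also placed $\widehat\gamma$ in the correct branch. Conceptually, the key realization is that $(K_P-r)K_S>(N-K_P)K_P$ is not an ad hoc hypothesis but exactly the condition under which $\psi(\gamma)\to-\infty$, which together with the strict monotonicity of $\gamma\mapsto\gamma\psi'(\gamma)$ guarantees existence and uniqueness of the maximizer.
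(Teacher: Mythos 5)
Your proposal is correct and takes essentially the same route as the paper's proof: both reduce the partially-compressed likelihood to the same piecewise scalar function of $\gamma$, use the key bound $\gamma\gamma_i<K_P/K_S\Rightarrow K/(\gamma\gamma_i+1)>K_S$ to show strict increase on every branch below $\frac{K_P}{K_S}\frac{1}{\gamma_r}$, and locate the maximizer on the last branch through the strictly decreasing map $\gamma\mapsto\gamma\,\frac{d}{d\gamma}$(likelihood), yielding the stated stationarity equation under the proviso $(K_P-r)K_S>(N-K_P)K_P$. The only cosmetic difference is that the paper first invokes the limits at $0$ and $+\infty$ to guarantee an interior stationary point, whereas you argue directly via unimodality of the top branch plus the endpoint check at $\frac{K_P}{K_S}\frac{1}{\gamma_r}$.
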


\begin{proof}
See Appendix \ref{App:ProofTheorem4}.
\end{proof}

Finally, the GLRT, referred to in the following as
second-order unknown subspace in partially-homogeneous environment
(SO-US-PHE) detector, is given by
\be
L_1(\widehat{\bR}, \widehat{\tilde{\bR}}_s, \widehat{\gamma}; \bZ)-L_0(\widehat{\bR},  \widehat{\gamma}; \bZ)
\test \eta
\label{eq:1S-SO-GLRT-UH-PHE}
\ee
with
$L_0(\widehat{\bR},  \widehat{\gamma}; \bZ)$ given by the logarithm of the maximum of eq. (\ref{eq:PDF_H0_FO_knownH}) with respect to $\gamma$
obtained by using 
Theorem \ref{theorem_minimization_over_gamma_FH_PHE}.

\subsection{Known subspace $\langle \bH \rangle$, known  $\gamma$}

As a first step towards the computation of the GLRT, we extend \cite{BCCRV} where the 
case $\bH \in \C^{N}$ (rank-one signal) and $\gamma=1$ (homogeneous environment) is addressed. To this end, 
we
denote by $\bH_{\perp} \in \C^{N \times (N-r)}$ a slice of unitary matrix spanning the orthogonal complement of $\bH$.
It follows that the matrix $\bV=[ \bH \ \bH_{\perp}]\in\C^{N\times N}$ is a unitary one. 
Then, we rewrite the likelihoods under $H_1$ and $H_0$ as
\begin{align*}
\ell_1(\bR, \bR_s, \bH, \gamma; \bZ) &=
\frac{\etr{-
\left(\tilde{\bR} +\bE \bR_s \bE^\dag\right)^{-1}   
\tilde{\bZ}_P \tilde{\bZ}_P^{\dag}
}}
{\pi^{NK} \gamma^{NK_S} \det^{K_P}(\tilde{\bR} +\bE \bR_s \bE^\dag)}
\\ &\times \frac{\etr{-\frac{1}{\gamma} \tilde{\bR}^{-1} 
\tilde{\bZ}_S \tilde{\bZ}_S^{\dag}
}}
{\det^{K_S}(\tilde{\bR})}
\end{align*}
and
\[
\ell_0(\bR, \gamma; \bZ) =
\frac{\etr{-\left[ \tilde{\bR}^{-1} 
\tilde{\bZ}_P \tilde{\bZ}_P^{\dag}+
\frac{1}{\gamma}
\tilde{\bR}^{-1} 
\tilde{\bZ}_S \tilde{\bZ}_S^{\dag}
\right]}}
{\pi^{NK} \gamma^{NK_S} \det^{K}(\tilde{\bR})},
\]
respectively, where $\tilde{\bZ}_P= \bV^{\dag} \bZ_P=\left[\tilde{\bZ}_{P,1}^T \ \tilde{\bZ}_{P,2}^T \right]^T$, with
$\tilde{\bZ}_{P,1} \in \C^{r \times K_P}$
and $\tilde{\bZ}_{P,2} \in \C^{(N-r) \times K_P}$,
$\tilde{\bZ}_S= \bV^{\dag} \bZ_S=\left[\tilde{\bZ}_{S,1}^T \ \tilde{\bZ}_{S,2}^T \right]^T$, with
$\tilde{\bZ}_{S,1} \in \C^{r \times K_S}$
and $\tilde{\bZ}_{S,2} \in \C^{(N-r) \times K_S}$,
\begin{align*}
\tilde{\bR} &= \bV^{\dag} \bR \bV=
\left[
\begin{array}{cc}
\bH^{\dag} \bR \bH & \bH^{\dag} \bR \bH_{\perp} \\
\bH^\dag_{\perp}  \bR \bH & \bH^\dag_{\perp}  \bR \bH_{\perp}
\end{array}
\right]
\end{align*}
while $\bE=[\bI_r \ \bzero_{r,N-r}]^T$ and, obviously, also
\begin{align*}
\bV^{\dag} \left( \bH \bR_s \bH^{\dag}+\bR\right) \bV &=
\left[
\begin{array}{cc}
\bR_s + \bH^{\dag} \bR \bH & \bH^{\dag} \bR \bH_{\perp} \\
\bH^\dag_{\perp}  \bR \bH & \bH^\dag_{\perp}  \bR \bH_{\perp}
\end{array}
\right]
\\ &=
\tilde{\bR} +\bE \bR_s \bE^\dag.
\end{align*}
We now observe that \cite{Handbook_of_matrices,MatrixAnalysis}
\begin{eqnarray*}
\tilde{\bR}^{-1} &=& 
\left[
\begin{array}{cc}
\tilde{\bR}_{11}  & \tilde{\bR}_{12} \\
\tilde{\bR}_{21} & \tilde{\bR}_{22}
\end{array}
\right]^{-1}
\\ &=&
\left[
\begin{array}{cc}
\tilde{\bR}_{1.2}^{-1}  & -\tilde{\bR}_{1.2}^{-1}\bbeta^{\dag} \\
- \bbeta\tilde{\bR}_{1.2}^{-1} & \tilde{\bR}_{22}^{-1} +
\bbeta
\tilde{\bR}_{1.2}^{-1}
\bbeta^{\dag}
\end{array}
\right]
\\ &=&
\bB^\dag
\tilde{\bR}_{1.2}^{-1}
\bB+
\left[
\begin{array}{cc}
\bzero_{r,r}  & \bzero_{r,N-r} \\
\bzero_{N-r,r} & \tilde{\bR}_{22}^{-1}
\end{array}
\right]
\end{eqnarray*}
with $\bB=[\bI_r \ -\bbeta^\dag]\in\C^{r \times N}$, 
$\tilde{\bR}_{1.2}=\tilde{\bR}_{11}-\tilde{\bR}_{12}\tilde{\bR}_{22}^{-1}\tilde{\bR}_{21} \in \C^{r \times r}$, 
and $\bbeta=
\tilde{\bR}_{22}^{-1}\tilde{\bR}_{21} \in \C^{(N-r) \times r}$.
Similarly, we have that
\begin{align*}
&\left( \tilde{\bR} +\bE \bR_s \bE^\dag \right)^{-1} =
\left[
\begin{array}{cc}
\bR_s+\tilde{\bR}_{11}  & \tilde{\bR}_{12} \\
\tilde{\bR}_{21} & \tilde{\bR}_{22}
\end{array}
\right]^{-1}
\\ &=
\left[
\begin{array}{cc}
\left( \bR_s+\tilde{\bR}_{1.2}\right)^{-1}  & -\left( \bR_s+\tilde{\bR}_{1.2}\right)^{-1}\bbeta^{\dag} \\
- \bbeta\left( \bR_s+\tilde{\bR}_{1.2}\right)^{-1} & \tilde{\bR}_{22}^{-1} +
\bbeta
\left( \bR_s+\tilde{\bR}_{1.2}\right)^{-1}
\bbeta^{\dag}
\end{array}
\right]
\\ &=
\bB^\dag
\left( \bR_s+\tilde{\bR}_{1.2}\right)^{-1}
\bB
+
\left[
\begin{array}{cc}
\bzero_{r,r}  & \bzero_{r,N-r} \\
\bzero_{N-r,r} & \tilde{\bR}_{22}^{-1}
\end{array}
\right].
\end{align*}
Moreover, we have that \cite{Handbook_of_matrices,MatrixAnalysis}
$$
\det \tilde{\bR}= \det \tilde{\bR}_{22} \cdot \det \tilde{\bR}_{1.2}
$$
and
$$
\det \left( \tilde{\bR} +\bE \bR_s \bE^\dag \right) = \det \tilde{\bR}_{22} \cdot \det \left( \bR_s + \tilde{\bR}_{1.2} \right).
$$
It follows that 
\begin{align*}
&\ell_1(\bR, \bR_s, \bH, \gamma; \bZ) =
\frac{1}{\pi^{NK}} \frac{1}{\gamma^{NK_S}} 
\frac{1}{\det^{K}(\tilde{\bR}_{22})}
\\ &\times
\frac{\etr{-\left[
\tilde{\bR}_{22}^{-1}   
\tilde{\bZ}_{P,2} \tilde{\bZ}_{P,2}^{\dag} +
\frac{1}{\gamma} \tilde{\bR}_{22}^{-1} 
\tilde{\bZ}_{S,2} \tilde{\bZ}_{S,2}^{\dag} 
\right]}}
{\det^{K_P}(\bR_s + \tilde{\bR}_{1.2}) \det^{K_S}(\tilde{\bR}_{1.2})}
\\ &\times
\etr{-
\left( \bR_s+\tilde{\bR}_{1.2}\right)^{-1}
\bB  
\tilde{\bZ}_P \tilde{\bZ}_P^{\dag}
\bB^\dag
}
\\ &\times
\etr{- \frac{1}{\gamma}
\tilde{\bR}_{1.2}^{-1}
\bB   
\tilde{\bZ}_S \tilde{\bZ}_S^{\dag}
\bB^\dag
}.
\end{align*}
\medskip

Subsequent developments rely on the fact that 
we can estimate the parameters $\tilde{\bR}_{1.2}, \tilde{\bR}_{22}, \bbeta, \bR_s$ 
in place of $\bR, \bR_s$.
To this end, first observe that
the ML estimate of $\tilde{\bR}_{22}$, given $\gamma$, can be expressed as
$$
\widehat{\tilde{\bR}}_{22}=
\frac{1}{K} \left( \tilde{\bZ}_{P,2} \tilde{\bZ}_{P,2}^{\dag} +
\frac{1}{\gamma} \tilde{\bZ}_{S,2} \tilde{\bZ}_{S,2}^{\dag} \right)
$$
and the corresponding partially-compressed likelihood
becomes 
\begin{align}
\nonumber
&\dmax_{\tilde{\bR}_{22}} \ell_1(\bR, \bR_s, \bH, \gamma; \bZ) =
\frac{1}{\pi^{NK}} \frac{1}{\gamma^{NK_S}}
\frac{\left( {K}/{e} \right)^{(N-r)K}}
{\det^{K_P}(\bR_s + \tilde{\bR}_{1.2})} 
\\ \nonumber &\times
\frac{\etr{- 
\left( \bR_s+\tilde{\bR}_{1.2}\right)^{-1}
\bB   
\tilde{\bZ}_P \tilde{\bZ}_P^{\dag}
\bB^\dag}}
{\det^{K_S}(\tilde{\bR}_{1.2}) \det^{K}\left( \tilde{\bZ}_{P,2} \tilde{\bZ}_{P,2}^{\dag} +
\frac{1}{\gamma} \tilde{\bZ}_{S,2} \tilde{\bZ}_{S,2}^{\dag} \right)}
\\ &\times
\etr{- \frac{1}{\gamma}
\tilde{\bR}_{1.2}^{-1}
\bB   
\tilde{\bZ}_S \tilde{\bZ}_S^{\dag}
\bB^\dag }.
\label{eq:likelihood_compressed_with_respect_to_R22}
\end{align}
\medskip

Estimation of the remaining parameters
cannot be conducted in closed form at the best of authors' knowledge. For this reason we implement an alternating maximization \cite{Stoica_alternating} which
estimates a subset of the unknown parameters assuming that the remaining parameters are known and vice versa.
In particular, we exploit the following results.

\subsubsection{Estimate of $\tilde{\bR}_{1.2}$ and $\bR_s$, given $\bbeta$}

We write the logarithm of the partially-compressed likelihood 
\eqref{eq:likelihood_compressed_with_respect_to_R22} as follows

\begin{align}
\nonumber
&\dmax_{\tilde{\bR}_{22}}
L_1(\bR, \bR_s, \bH, \gamma; \bZ) =
C -NK_S \log \gamma 
\\ \nonumber &- K_P\log\det(\bR_s + \tilde{\bR}_{1.2})-K_S\log\det(\tilde{\bR}_{1.2})
\\ \nonumber &-
K \log \det\left( \tilde{\bZ}_{P,2} \tilde{\bZ}_{P,2}^{\dag} +
\frac{1}{\gamma} \tilde{\bZ}_{S,2} \tilde{\bZ}_{S,2}^{\dag} \right)
\\ 
&-
\tr\left[
\left( \bR_s+\tilde{\bR}_{1.2}\right)^{-1}
\tilde{\bS}_P
\right] 
- \frac{1}{\gamma} \tr\left[
\tilde{\bR}_{1.2}^{-1}
\tilde{\bS}_S
\right]
\label{eq:log_likelihood_compressed_with_respect_to_R22_log}
\end{align}
where 
$\tilde{\bS}_P = \bB \tilde{\bZ}_P \tilde{\bZ}_P^{\dag}\bB^\dag$
and $\tilde{\bS}_S = \bB \tilde{\bZ}_S \tilde{\bZ}_S^{\dag} \bB^\dag$,
while $C=-NK \log \pi +(N-r)K \left(\log K - \log e \right)$ gathers the terms that are irrelevant to the maximization. 
It can be shown that $K_S \geq N$ makes $\tilde{\bS}_S$ a non-singular matrix.
Exploiting Theorem \ref{eq:ML_estimate_two_cov}, with $\bR_s$ and $\tilde{\bR}_{1.2}$ in place of $\tilde{\bR}_s$, and $\bR$,
respectively, (notice also that the matrices are $r \times r$ in place of $N \times N$)
we obtain that
\begin{align}
\nonumber
&\dmax_{\tilde{\bR}_{22},\tilde{\bR}_{1.2}, \bR_s} 
L_1(\bR, \bR_s, \bH, \gamma; \bZ)
= C
\\ \nonumber &-
K \log \det\left( \tilde{\bZ}_{P,2} \tilde{\bZ}_{P,2}^{\dag} +
\frac{1}{\gamma} \tilde{\bZ}_{S,2} \tilde{\bZ}_{S,2}^{\dag} \right)
\\ \nonumber 
&- NK_S \log \gamma 
- 2K \log |\det(\bK)| -rK
\\ &+
 \ K\sum_{i=1}^r\log \frac{\gamma K}{\gamma \gamma_i + \widehat{\lambda}_i(\gamma)}
+ K_S \sum_{i=1}^r\log \widehat{\lambda}_i (\gamma)
\label{eq:log_likelihood_compressed_with_respect_to_R22_R1dot2_Rs_log}
\end{align}
with $\gamma_1\geq\ldots\geq\gamma_r \geq 0$ the eigenvalues of $\tilde{\bS}_S^{-1/2} \tilde{\bS}_P \tilde{\bS}_S^{-1/2}
\in\C^{r\times r}$,
$\bV\in\C^{r\times r}$ the unitary 
matrix of the corresponding eigenvectors
of $\tilde{\bS}_S^{-1/2} \tilde{\bS}_P \tilde{\bS}_S^{-1/2}$, 
$\bK=\tilde{\bS}_S^{1/2}\bV
\in\C^{r\times r}$, and
$$
\widehat{\lambda}_i(\gamma)=\max\left(\frac{K_S}{K_P}\gamma \gamma_i ,1\right), \quad i=1,\ldots,r.
$$ 
Notice that application of the theorem returns the following estimates of $\tilde{\bR}_{1.2}$ and $\bR_s$
$$
\widehat{\tilde{\bR}}_{1.2}=\widehat{\bM} \widehat{\bM}^\dag \quad \mbox{and} \quad \widehat{\bR}_s=\widehat{\bM} \widehat{\bLambda} \widehat{\bM}^\dag
-\widehat{\tilde{\bR}}_{1.2}
$$
where
$\widehat{\bLambda}=\diag(\widehat{\lambda}_1,\ldots,\widehat{\lambda}_r) \in\R^{r\times r}$
while $\widehat{\bM}= \bK  \widehat{\bD^2}^{-1/2} \widehat{\bLambda}^{-1/2} $ with 
$\widehat{\bD^2}=\diag\left(\widehat{d^2}_1,\ldots,\widehat{d^2}_r\right)\in\R^{r\times r}$,
$$
\widehat{d_i}^2(\gamma)= \frac{\gamma K}{\gamma \gamma_i + \widehat{\lambda}_i(\gamma)}, \quad i=1,\ldots,r.
$$
\medskip

\subsubsection{Estimate of $\bbeta$ given $\tilde{\bR}_{1.2}$ and $ \bR_s$}

First we observe that
\begin{align*}
&\bB   
\tilde{\bA}
\bB^\dag =
\tilde{\bA}_{11}-\bbeta^{\dag}  \tilde{\bA}_{21}
- \tilde{\bA}_{12} \bbeta 
+ \bbeta^{\dag}\tilde{\bA}_{22}
\bbeta
\end{align*}
where 
$$
\tilde{\bA}=\tilde{\bZ}_P \tilde{\bZ}_P^{\dag}=
\left[
\begin{array}{cc}
\tilde{\bA}_{11} & \tilde{\bA}_{12} \\
\tilde{\bA}_{21} & \tilde{\bA}_{22}
\end{array}
\right]
$$
with $\tilde{\bA}_{11} \in \C^{r \times r}$,
$\tilde{\bA}_{12} \in \C^{r \times (N-r)}$,
$\tilde{\bA}_{22} \in \C^{(N-r) \times (N-r)}$,
$\tilde{\bA}_{21} \in \C^{(N-r) \times r}$,
and 
\begin{align*}
&\bB   
\tilde{\bB}
\bB^\dag =
\tilde{\bB}_{11}-\bbeta^{\dag}  \tilde{\bB}_{21}
- \tilde{\bB}_{12} \bbeta 
+ \bbeta^{\dag}\tilde{\bB}_{22}
\bbeta
\end{align*}
where 
$$
\tilde{\bB}=\tilde{\bZ}_S \tilde{\bZ}_S^{\dag}=
\left[
\begin{array}{cc}
\tilde{\bB}_{11} & \tilde{\bB}_{12} \\
\tilde{\bB}_{21} & \tilde{\bB}_{22}
\end{array}
\right]
$$
with
$\tilde{\bB}_{11} \in \C^{r \times r}$,
$\tilde{\bB}_{12} \in \C^{r \times (N-r)}$,
$\tilde{\bB}_{22} \in \C^{(N-r) \times (N-r)}$,
$\tilde{\bB}_{21} \in \C^{(N-r) \times r}$.
Thus, maximization of the right-hand side of eq. (\ref{eq:likelihood_compressed_with_respect_to_R22})
with respect to $\bbeta$ is tantamount to minimizing the following function
\begin{eqnarray*}
g(\bbeta) &=&
\tr \ \left[ \left( \bR_s+\tilde{\bR}_{1.2}\right)^{-1}
\right.
\\ &\times& \left.
\left( -\bbeta^{\dag}  \tilde{\bA}_{21}
- \tilde{\bA}_{12} \bbeta 
+ \bbeta^{\dag}\tilde{\bA}_{22}
\bbeta \right) \right.
\\ &+& \left. \frac{1}{\gamma} \tilde{\bR}_{1.2}^{-1}
\left( -\bbeta^{\dag}  \tilde{\bB}_{21}
- \tilde{\bB}_{12} \bbeta 
+ \bbeta^{\dag}\tilde{\bB}_{22}
\bbeta \right)
\right].
\end{eqnarray*}
On the other hand, setting to zero the derivative of $g$
with respect to $\bbeta$, we have that
\begin{align*}
\frac{\partial }{\partial \bbeta}g(\bbeta) &=
\left( \tilde{\bA}_{22}^{T}
\bbeta^{*} - \tilde{\bA}_{12}^{T} \right) \left(  \bR_s + \tilde{\bR}_{1.2} \right)^{-T} 
\\ &+
\frac{1}{\gamma} \left( \tilde{\bB}_{22}^{T}
\bbeta^{*} - \tilde{\bB}_{12}^{T} \right)  \tilde{\bR}_{1.2}^{-T}= \bzero_{N-r,r}.
\end{align*}
The above equation can be rewritten as
\begin{align*}
&\tilde{\bA}_{22}^{T}
\bbeta^{*} \left( \tilde{\bR}_{1.2} + \bR_s \right)^{-T}
+ \frac{1}{\gamma}
\tilde{\bB}_{22}^{T}
\bbeta^* \tilde{\bR}_{1.2}^{-T}
=
\tilde{\bA}_{12}^{T} 
\\ &\times \left( \tilde{\bR}_{1.2} + \bR_s \right)^{-T}
+ \frac{1}{\gamma}
\tilde{\bB}_{12}^{T}
\tilde{\bR}_{1.2}^{-T}
\end{align*}
and, exploiting the identity  
7.2 (7) in \cite{Handbook_of_matrices}
$$
\makebox{vec} \left( \bA \bX \bB \right)= \left(\bB^T \otimes \bA \right) \makebox{vec}\bX,
$$
also as
$$
\left[ \left( \tilde{\bR}_{1.2} + \bR_s \right)^{-1} \otimes \tilde{\bA}_{22}^{T}
+ \frac{1}{\gamma} \tilde{\bR}_{1.2}^{-1} \otimes 
\tilde{\bB}_{22}^{T}
\right] \makebox{vec}\bbeta^*=
\makebox{vec} \bC
$$
with
$$
\bC=
\tilde{\bA}_{12}^{T} \left( \tilde{\bR}_{1.2} + \bR_s \right)^{-T}
+ \frac{1}{\gamma} 
\tilde{\bB}_{12}^{T}
\tilde{\bR}_{1.2}^{-T}.
$$
Thus, 
its solution is given by 
\begin{align}
\nonumber
\makebox{vec} \bbeta^* &=
\left[ \left( \tilde{\bR}_{1.2} + \bR_s \right)^{-1} \otimes \tilde{\bA}_{22}^{T}
+ \frac{1}{\gamma} \tilde{\bR}_{1.2}^{-1} \otimes 
\tilde{\bB}_{22}^{T}
\right]^{-1} 
\\ &\times \makebox{vec}\bC.
\label{eq:beta_solution}
\end{align}

Now we observe that the matrix $\gamma \bR$ can be estimated as the sample covariance matrix of the secondary data only and used to construct an estimate of $\bbeta=
\tilde{\bR}_{22}^{-1}\tilde{\bR}_{21}$; denoting by $\widehat{\bbeta}^{(0)}$ this starting value we can exploit previous results to obtain after $n$ iteration of alternating mazimization $\widehat{\tilde{\bR}}_{1.2}^{(n)}$, $\widehat{\bR}_s^{(n)}$, and $\widehat{\bbeta}^{(n)}$ that together with $\widehat{\tilde{\bR}}_{22}$ allow to compute
$\widehat{\bR}_s$ and $\widehat{\bR}$.

Finally, the GLRT, referred to in the following as
second-order known subspace in homogeneous environment
(SO-KS-HE) detector, is given by
\be
L_1(\widehat{\bR}, \widehat{\bR}_s, \bH, 1; \bZ)-L_0( \widehat{\bR},  1; \bZ)
\test \eta
\label{eq:1S-SO-GLRT-KH-HE}
\ee
with
$L_0(\widehat{\bR},  1; \bZ)$ given by the logarithm of eq. (\ref{eq:PDF_H0_FO_knownH}).

\subsection{Known subspace $\langle \bH \rangle$, unknown  $\gamma$}

Derivation of the GLRT for partially-homogeneous environment 
is still based on alternating maximization; this time
we estimate $\tilde{\bR}_{1.2}$, $\bR_s$, and $\gamma$, given $\bbeta$,
and again we estimate $\bbeta$, given $\tilde{\bR}_{1.2}$, $ \bR_s$, and $\gamma$, using eq. (\ref{eq:beta_solution}).
\smallskip

Estimating $\tilde{\bR}_{1.2}$, $\bR_s$, and $\gamma$, given $\bbeta$,
requires maximizing eq. (\ref{eq:log_likelihood_compressed_with_respect_to_R22_R1dot2_Rs_log}) with respect to $\gamma$. To this end, 
we define
$\tilde{\bS}_{P,2}=\tilde{\bZ}_{P,2} \tilde{\bZ}_{P,2}^{\dag}
\in\C^{(N-r)\times (N-r)}$,
 $\tilde{\bS}_{S,2}=
\tilde{\bZ}_{S,2} \tilde{\bZ}_{S,2}^{\dag}
\in\C^{(N-r)\times (N-r)}$, 
and
$\tilde{\bS}_{S,2}^{-1/2} \tilde{\bS}_{P,2} \tilde{\bS}_{S,2}^{-1/2}
\in\C^{(N-r)\times (N-r)}$.
It follows that
\begin{align}
\nonumber
&\dmax_{\tilde{\bR}_{22},\tilde{\bR}_{1.2}, \bR_s} 
L_1(\bR, \bR_s, \bH, \gamma; \bZ)
= C- K\log \det \left(\tilde{\bS}_{S,2} \right)
\\ \nonumber &- 2K \log |\det(\bK)| -rK
\\ \nonumber 
&- 
K \log \det\left( \tilde{\bS}_{S,2}^{-1/2} \tilde{\bS}_{P,2} \tilde{\bS}_{S,2}^{-1/2} +
\frac{1}{\gamma} \bI_{N-r} \right)
-NK_S \log \gamma 
\\ &+ 
 \ K\sum_{i=1}^r\log \frac{\gamma K}{\gamma \gamma_i + \widehat{\lambda}_i(\gamma)}
+ K_S \sum_{i=1}^r\log \widehat{\lambda}_i (\gamma).
\label{eq:log_likelihood_compressed_with_respect_to_R22_R1dot2_Rs_log1}
\end{align}
Thus, denoting by 
$\delta_1\geq\ldots\geq\delta_{N-r}\geq 0$ the eigenvalues of $\tilde{\bS}_{S,2}^{-1/2} \tilde{\bS}_{P,2} \tilde{\bS}_{S,2}^{-1/2}$, we also have that
\begin{align}
\nonumber
&\dmax_{\tilde{\bR}_{22},\tilde{\bR}_{1.2}, \bR_s} L_1( \bR, \bR_s, \bH, \gamma; \bZ)
= C- K\log \det \left(\tilde{\bS}_{S,2}\right)
\\ \nonumber &- 2K \log |\det(\bK)| -rK
\\ \nonumber 
&- 
K \sum_{i=1}^{N-r} \log \left( \frac{1}{\gamma} + \delta_i \right)
-NK_S \log \gamma 
\\ &+ 
 \ K\sum_{i=1}^r\log \frac{\gamma K}{\gamma \gamma_i + \widehat{\lambda}_i(\gamma)}
+ K_S \sum_{i=1}^r\log \widehat{\lambda}_i (\gamma).
\label{eq:log_likelihood_compressed_with_respect_to_R22_R1dot2_Rs_log2}
\end{align}
To maximize the partially-compressed likelihood of 
eq. (\ref{eq:log_likelihood_compressed_with_respect_to_R22_R1dot2_Rs_log2})
with respect to $\gamma$, we can use the following theorem.  For simplicity we assume $K_P \geq r$.
\bigskip

\begin{theorem}
\label{Theorem5}
The global maximum of the function
\begin{align*}
f(\gamma) &=
- 
K \sum_{i=1}^{N-r} \log \left( \frac{1}{\gamma} + \delta_i \right)
-NK_S \log \gamma 
\\ &+ 
 K\sum_{i=1}^r\log \frac{\gamma K}{\gamma \gamma_i + \widehat{\lambda}_i(\gamma)}
+ K_S \sum_{i=1}^r\log \widehat{\lambda}_i (\gamma),
\end{align*}
$K=K_P+K_S$, $K_P \geq r$, is attained at the unique solution over $\gamma \in (0, +\infty)$ of the
equation
\begin{align*}
&\sum_{i=1}^{\min(K_P,N-r)}  
\frac{1}{1+\gamma \delta_i}
+ \left( N-r-\min(K_P,N-r)\right) 
\\ &-\frac{(N-r) K_S}{K} =0,
\end{align*}
say $\gamma^*$, 
if $\gamma^* \geq \frac{K_P}{K_S} \frac{1}{\gamma_r}$.  Otherwise,
the stationary points of $f$ and, hence, its global maximum  belong to the interval $(\gamma^*, \frac{K_P}{K_S} \frac{1}{\gamma_r})$.
\end{theorem}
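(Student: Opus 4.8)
The plan is to reduce the maximization to locating the unique zero of a continuous, strictly decreasing function, in the same spirit as the proof of Theorem~\ref{theorem_minimization_over_gamma_FH_PHE}, but now with extra care for the kinks introduced by $\widehat{\lambda}_i(\gamma)=\max(\tfrac{K_S}{K_P}\gamma\gamma_i,1)$, $i=1,\dots,r$, and their breakpoints $\gamma^{(i)}:=\tfrac{K_P}{K_S}\tfrac{1}{\gamma_i}$ (increasing in $i$ since $\gamma_1\ge\cdots\ge\gamma_r>0$ w.p.~$1$, as $K_P\ge r$).

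First I would differentiate $f$. Writing $f(\gamma)=A(\gamma)+\sum_{i=1}^{r}\max_{\lambda\ge1}g_i(\lambda,\gamma)$ with $A(\gamma)=-K\sum_{i=1}^{N-r}\log(\tfrac1\gamma+\delta_i)-NK_S\log\gamma$ and $g_i(\lambda,\gamma)=K\log\tfrac{\gamma K}{\gamma\gamma_i+\lambda}+K_S\log\lambda$, the inner maximizer is $\widehat{\lambda}_i(\gamma)$; since it is continuous and piecewise $C^1$ and $g_i$ is jointly $C^1$, the chain rule gives $\tfrac{d}{d\gamma}\max_\lambda g_i=\partial_\lambda g_i\,\widehat{\lambda}_i'+\partial_\gamma g_i$, where the first term vanishes on the branch $\widehat{\lambda}_i\equiv1$ (because $\widehat{\lambda}_i'=0$) and on the branch $\widehat{\lambda}_i=\tfrac{K_S}{K_P}\gamma\gamma_i$ (because the maximizer is interior). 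Hence $\tfrac{d}{d\gamma}\max_\lambda g_i=\partial_\gamma g_i\big|_{\lambda=\widehat{\lambda}_i}=\tfrac{K\widehat{\lambda}_i}{\gamma(\gamma\gamma_i+\widehat{\lambda}_i)}$, $f\in C^1(0,+\infty)$ (this $C^1$-patching across the breakpoints is the one genuinely delicate point), and collecting terms,
\[
\frac{\gamma}{K}f'(\gamma)=\phi(\gamma):=\sum_{i=1}^{N-r}\frac{1}{1+\gamma\delta_i}+\sum_{i=1}^{r}\frac{\widehat{\lambda}_i(\gamma)}{\gamma\gamma_i+\widehat{\lambda}_i(\gamma)}-\frac{NK_S}{K}.
\]
The algebraic fact that makes everything work is $\tfrac{\widehat{\lambda}_i(\gamma)}{\gamma\gamma_i+\widehat{\lambda}_i(\gamma)}=\min\!\left(\tfrac{1}{1+\gamma\gamma_i},\tfrac{K_S}{K}\right)$, the two expressions coinciding at $\gamma^{(i)}$; each such term is continuous and nonincreasing in $\gamma$. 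Since $\mathrm{rank}\big(\tilde{\bS}_{S,2}^{-1/2}\tilde{\bS}_{P,2}\tilde{\bS}_{S,2}^{-1/2}\big)=m:=\min(K_P,N-r)\ge1$ (the rank argument of Theorem~\ref{TheoremRank02}), at least one $\delta_i>0$, so $\phi$ is continuous and \emph{strictly} decreasing on $(0,+\infty)$, with $\phi(0^+)=NK_P/K>0$. Thus $f$ is strictly increasing then strictly decreasing and has a unique stationary point $\bar{\gamma}$, its global maximum.

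Next I would compare $\bar{\gamma}$ with $\gamma^{(r)}=\tfrac{K_P}{K_S}\tfrac{1}{\gamma_r}$, the largest breakpoint. On the top regime $\gamma\ge\gamma^{(r)}$ every $\widehat{\lambda}_i$ is active, so $\tfrac{\widehat{\lambda}_i}{\gamma\gamma_i+\widehat{\lambda}_i}=\tfrac{K_S}{K}$ for all $i$, and since $\delta_i=0$ for $i>m$ one gets $\phi(\gamma)=\psi(\gamma):=\sum_{i=1}^{m}\tfrac{1}{1+\gamma\delta_i}+(N-r-m)-\tfrac{(N-r)K_S}{K}$, which is exactly the left-hand side of the displayed equation. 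The function $\psi$ is strictly decreasing on $(0,+\infty)$ with $\psi(0^+)=(N-r)K_P/K>0$ and $\psi(+\infty)=(N-r-m)-\tfrac{(N-r)K_S}{K}<0$ (the last inequality because $N-r<K$, which holds since $K_S\ge N$), so $\psi$ has a unique zero $\gamma^*$. If $\gamma^*\ge\gamma^{(r)}$, then $\gamma^*$ lies in the top regime where $\phi\equiv\psi$, hence $\phi(\gamma^*)=0$ and therefore $\bar{\gamma}=\gamma^*$, which is the first assertion.

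Finally, for $\gamma^*<\gamma^{(r)}$ I would compute, for $\gamma\in(0,\gamma^{(r)})$,
\[
\phi(\gamma)-\psi(\gamma)=\sum_{i:\,\gamma<\gamma^{(i)}}\left(\frac{1}{1+\gamma\gamma_i}-\frac{K_S}{K}\right)>0,
\]
each summand being positive because $\gamma<\gamma^{(i)}$ is equivalent to $\tfrac{1}{1+\gamma\gamma_i}>\tfrac{K_S}{K}$. Evaluating at $\gamma^*$ gives $\phi(\gamma^*)>\psi(\gamma^*)=0$, so $\bar{\gamma}>\gamma^*$ (as $\phi$ is strictly decreasing with $\phi(\bar{\gamma})=0$); and $\bar{\gamma}<\gamma^{(r)}$, since otherwise $\bar{\gamma}$ would lie in the top regime and the previous paragraph would force $\bar{\gamma}=\gamma^*<\gamma^{(r)}\le\bar{\gamma}$, a contradiction. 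Hence the unique stationary point of $f$, and therefore its global maximum, lies in $(\gamma^*,\gamma^{(r)})$, as claimed. Beyond the $C^1$ verification at the breakpoints noted above, the remaining steps are the monotone-equation bookkeeping already familiar from Theorem~\ref{theorem_minimization_over_gamma_FH_PHE} and Corollaries~\ref{corollary1_minimization_over_gamma_FH_PHE}--\ref{corollary2_minimization_over_gamma_FH_PHE}.
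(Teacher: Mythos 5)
Your argument is correct and is essentially the paper's: a sign analysis of $df/d\gamma$ organized around the breakpoints $\frac{K_P}{K_S}\frac{1}{\gamma_i}$ of $\widehat{\lambda}_i(\gamma)$, with the same $\gamma^*$ (zero of the $N-r$ part of the derivative) compared against $\frac{K_P}{K_S}\frac{1}{\gamma_r}$; where the paper tabulates the piecewise derivatives of $f_1$ and $f_2$ separately and argues their signs on each interval, you obtain the same derivative in one stroke via the envelope theorem and package it as a single strictly decreasing function $\phi$, which in addition yields uniqueness of the stationary point (slightly more than the paper asserts in the case $\gamma^*<\frac{K_P}{K_S}\frac{1}{\gamma_r}$). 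One slip to fix: since $\lambda\mapsto\lambda/(\gamma\gamma_i+\lambda)$ is increasing and $\widehat{\lambda}_i(\gamma)=\max\left(\frac{K_S}{K_P}\gamma\gamma_i,1\right)$, the key identity is $\frac{\widehat{\lambda}_i(\gamma)}{\gamma\gamma_i+\widehat{\lambda}_i(\gamma)}=\max\left(\frac{1}{1+\gamma\gamma_i},\frac{K_S}{K}\right)$, not $\min$; everything you use afterwards (the value $\frac{1}{1+\gamma\gamma_i}$ before the breakpoint, the constant $\frac{K_S}{K}$ after, continuity and monotonicity, and the positivity of $\phi-\psi$ below $\frac{K_P}{K_S}\frac{1}{\gamma_r}$) is consistent with the $\max$ version, so this is a typo rather than a gap.
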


\begin{proof}
See Appendix \ref{App:ProofTheorem5}.
\end{proof}

Once we have determined $\widehat{\gamma}$, we can compute the GLRT, referred to in the following as 
second-order known subspace in partially-homogeneous environment
(SO-KS-PHE) detector, that can be written as
\be
L_1(\widehat{\bR}, \widehat{\bR}_s, \bH, \widehat{\gamma}; \bZ)-L_0(\widehat{\bR},  \widehat{\gamma}; \bZ)
\test \eta
\label{eq:1S-SO-GLRT-KH-PHE}
\ee
with
$L_0(\widehat{\bR},  \widehat{\gamma}; \bZ)$  
given by the logarithm of the maximum of eq. (\ref{eq:PDF_H0_FO_knownH}) with respect to $\gamma$
obtained by using 
Theorem \ref{theorem_minimization_over_gamma_FH_PHE}.

\section{Conclusions}
The original adaptive detectors  of \cite{Kelly86, CLR1995, Scharf-McWhorter}, as refined in \cite{Kraut-Scharf1999, Kraut-Scharf-McWhorter} and extended in \cite{CDMR2001,BDMGR}, have been generalized by considering three new subspace signal models.  In the first, the subspace visited by a sequence of symbol transmissions or reflections is assumed to be known only by its dimension; in the previous work by \cite{CDMR2001,BDMGR} the subspace was known. In the second  extension, a known subspace is visited by a sequence of symbol transmissions which are constrained  by a Gaussian prior distribution; the result is a second-order adaptive subspace detector. In the third extension, the subspace is known only by its dimension; this extension requires a two-channel extension of standard factor analysis. These extensions, coupled with the results of \cite{CDMR2001,BDMGR}, comprise a unified theory of adaptive subspace detection.

In a companion paper, detector performance will be compared against  subspace detectors that use an ad-hoc estimate of unknown noise covariance in known subspace detectors.

\appendices


\section{Proof of Theorem \ref{TheoremSO_US_HE}}
\label{App:ProofTheoremSO_US_HE}
First notice that the function
$$
h_i(\lambda_i)=
K \log \frac{\gamma K}{\gamma \gamma_i + \lambda_i}
+ K_S \log \lambda_i
$$
tends to $-\infty$ as $\lambda_i$ tends to $+\infty$; 
in addition, its derivative is positive iff
$$
-\frac{K}{\gamma \gamma_i + \lambda_i}
+\frac{K_S}{\lambda_i} >0
$$
or equivalently iff
$K_P \lambda_i < K_S \gamma \gamma_i$.
Thus, 
it follows that
$h_i$  has the global maximum over $[1,+\infty)$ at
$$
\tilde{\lambda}_i=\max \left(\frac{K_S \gamma \gamma_i}{K_P},1 \right).
$$
In particular, $\gamma_i=0$ implies $\tilde{\lambda}_i=0$.
It turns out that
\begin{itemize}
\item
if $\gamma < \frac{K_P}{K_S}\frac{1}{\gamma_1}$, all maximizers are equal to one and
$$
\max_{\lambda_i} h_i(\lambda_i)=
K \log \frac{\gamma K}{\gamma \gamma_i + 1}, \quad i=1, \ldots, N.
$$
Thus, the compressed likelihood under $H_1$ becomes
\begin{align}
\nonumber
&L_1(\widehat{\bR}, \widehat{\tilde{\bR}}_s, \gamma; \bZ)
=-NK \log \pi -NK_S \log \gamma 
\\ &- 2K \log |\det(\bK)| 
+ K\sum_{i={1}}^{N}\log \frac{\gamma K}{\gamma \gamma_i + 1}-NK.
\label{eq:log_likelihood_H1_7_known_rank-1}
\end{align}
\item
if $\frac{K_P}{K_S}\frac{1}{\gamma_{i-1}} \leq \gamma < \frac{K_P}{K_S}\frac{1}{\gamma_i}$, $i=2, \ldots, r$, 
the maximizers of $h_j$ are
$$
\tilde{\lambda}_j(\gamma)= 
\frac{K_S \gamma \gamma_j}{K_P}>1, \quad j=1, \ldots, i-1,
$$
and the remaining maximizers are $\tilde{\lambda}_j(\gamma)=1$, $j=i, \ldots, N$.
Thus, the
compressed likelihood under $H_1$ becomes
\begin{align}
\nonumber
&L_1(\widehat{\bR}, \widehat{\tilde{\bR}}_s, \gamma; \bZ)
=-NK \log \pi -NK_S \log \gamma 
\\ \nonumber &- 2K \log |\det(\bK)| +
\sum_{j={1}}^{i-1} \left[ K \log \frac{K_P}{\gamma_j}+ K_S \log \frac{K_S \gamma \gamma_j}{K_P} \right]
\\ &+
 \sum_{j={i}}^{N}K \log \frac{\gamma K}{\gamma \gamma_{j} + 1}
-NK.
\label{eq:log_likelihood_H1_7_known_rank-3}
\end{align}
\item
if $K_P=r$ and $\gamma \geq \frac{K_P}{K_S}\frac{1}{\gamma_{r}}$ 
the maximizers of $h_i$ are
$$
\tilde{\lambda}_i(\gamma)= 
\frac{K_S \gamma \gamma_i}{K_P}>1, \quad i=1, \ldots, r,
$$
and the remaining maximizers are $\tilde{\lambda}_i(\gamma)=1$, $i=r+1, \ldots, N$
(as a matter of fact, $\gamma_{K_P+1}=\cdots=\gamma_{N}=0$).
Thus, the
compressed likelihood under $H_1$ becomes
\begin{align}
\nonumber
&L_1(\widehat{\bR}, \widehat{\tilde{\bR}}_s, \gamma; \bZ)
=-NK \log \pi -NK_S \log \gamma 
\\ \nonumber &- 2K \log |\det(\bK)| +
 \sum_{i={1}}^{r} \left[ K \log \frac{K_P}{\gamma_i}+ K_S \log \frac{K_S \gamma \gamma_i}{K_P} \right]
\\ &+
 \sum_{i={r+1}}^{N}K\log \frac{\gamma K}{\gamma \gamma_{i} + 1}
-NK.
\label{eq:log_likelihood_H1_7_known_rank-4}
\end{align}
If instead $K_P=r+m$, $m \geq 1$, we have to distinguish the following cases
\begin{itemize}
\item
if $\frac{K_P}{K_S}\frac{1}{\gamma_{r}} \leq \gamma < \frac{K_P}{K_S}\frac{1}{\gamma_{r+1}}$
the maximizers of $h_i$ are
$$
\tilde{\lambda}_i(\gamma)= 
\frac{K_S \gamma \gamma_i}{K_P}>1, \quad i=1, \ldots, r,
$$
and the remaining maximizers are $\tilde{\lambda}_i(\gamma)=1$, $i=r+1, \ldots, N$.
Thus, the
compressed likelihood under $H_1$ becomes
\begin{align}
\nonumber
&L_1(\widehat{\bR}, \widehat{\tilde{\bR}}_s, \gamma; \bZ)
=-NK \log \pi -NK_S \log \gamma 
\\ \nonumber &- 2K \log |\det(\bK)| 
\\ \nonumber &+
\sum_{i={1}}^{r} \left[ K \log \frac{K_P}{\gamma_i}+ K_S \log \frac{K_S \gamma \gamma_i}{K_P} \right]
\\ &+
 \sum_{i={r+1}}^{N} K \log \frac{\gamma K}{\gamma \gamma_{i} + 1}
-NK.
\label{eq:log_likelihood_H1_7_known_rank-5}
\end{align}
\item
if $
\frac{K_P}{K_S}\frac{1}{\gamma_{r+h}}
\leq \gamma < \frac{K_P}{K_S}\frac{1}{\gamma_{r+h+1}}$, $h=1, \ldots, m-1$, 
the maximizers of $h_i$ are (this case refers to $m>1$)
$$
\tilde{\lambda}_i(\gamma)= 
\frac{K_S \gamma \gamma_i}{K_P}>1, \quad i=1, \ldots, r+h,
$$
and the remaining maximizers are $\tilde{\lambda}_i(\gamma)=1$, $i=r+h+1, \ldots, N$.
However, the fact that the maximum number of $\lambda_i >1$ has to be $r$ (at most) together with the descending order of the $\lambda_i$s implies that
the
compressed likelihood under $H_1$ is still given by eq. (\ref{eq:log_likelihood_H1_7_known_rank-5}).
\item
Finally,
if $\gamma \geq \frac{K_P}{K_S}\frac{1}{\gamma_{r+m}}$ 
the maximizers of $h_i$ are
$$
\tilde{\lambda}_i(\gamma)= 
\frac{K_S \gamma \gamma_i}{K_P}>1, \quad i=1, \ldots, r+m,
$$
and the remaining maximizers (if any) are $\tilde{\lambda}_i(\gamma)=1$, $i=r+m+1, \ldots, N$
(as a matter of fact, $\gamma_{K_P+1}=\cdots=\gamma_{N}=0$).
However, the
compressed likelihood under $H_1$ 
is still given by eq. (\ref{eq:log_likelihood_H1_7_known_rank-5}).
\end{itemize}
\end{itemize}
The statement of the theorem follows.

\section{Proof of Theorem \ref{Theorem4}}
\label{App:ProofTheorem4}
Preliminary let
\be
\nonumber
g( \gamma)
=-NK_S \log \gamma 
+ g_1(\gamma)
\ee
with 
$$
g_1(\gamma)=
\begin{cases}
\begin{cases}
\sum_{i=1}^{r-1} K \log \frac{\gamma K}{\gamma \gamma_i + \widehat{\lambda}_i(\gamma)}+  \sum_{i=1}^{r-1} K_S \log \widehat{\lambda}_i (\gamma)
\\ 
+ \sum_{i={r}}^{N} K \log \frac{\gamma K}{\gamma \gamma_{i} + 1}
\\ 
\gamma < \frac{K_P}{K_S} \frac{1}{\gamma_r}
\end{cases}  \\
\begin{cases}
\sum_{i=1}^r \left[ K \log \frac{K_P}{\gamma_i}+K_S \log \frac{K_S \gamma \gamma_i}{K_P} \right]
\\ + \sum_{i=r+1}^N K \log \frac{\gamma K}{\gamma \gamma_i+1} \\
\gamma \geq \frac{K_P}{K_S} \frac{1}{\gamma_r}
\end{cases}
\end{cases}
$$
More specifically, we have that
\be
g_1(\gamma)=
\begin{cases}
\begin{cases}
K\sum_{i=1}^N\log \frac{\gamma K}{\gamma \gamma_i + 1}, \\
\gamma < \frac{K_P}{K_S} \frac{1}{\gamma_1},
\end{cases} \\
\begin{cases}
\sum_{i=1}^{j-1} \left[ K \log \frac{K_P}{\gamma_i}
+ K_S \log \frac{K_S \gamma \gamma_i}{K_P} \right] \\
+K\sum_{i=j}^{N} \log \frac{\gamma K}{\gamma\gamma_i+1}, \\
\frac{K_P}{K_S} \frac{1}{\gamma_{j-1}} \leq \gamma < \frac{K_P}{K_S} \frac{1}{\gamma_j}, j=2, \ldots, r,
\end{cases} \\
\begin{cases}
\sum_{i=1}^r \left[ K \log \frac{K_P}{\gamma_i}+K_S \log \frac{K_S \gamma \gamma_i}{K_P} \right]
\\ + \sum_{i=r+1}^N K \log \frac{\gamma K}{\gamma \gamma_i+1}, \\
\gamma \geq \frac{K_P}{K_S} \frac{1}{\gamma_r}.
\end{cases}
\end{cases}
\label{eq:g1_gamma}
\ee
Then, notice that
$$
\lim_{\gamma \rightarrow 0} g(\gamma)=
\lim_{\gamma \rightarrow 0} \left[ -NK_S \log \gamma
+ K\sum_{i=1}^N\log \frac{\gamma K}{\gamma \gamma_i + 1}
\right]=-\infty
$$
and
\begin{align*}
&\lim_{\gamma \rightarrow +\infty} g(\gamma)=
\lim_{\gamma \rightarrow +\infty} \left[ -NK_S \log \gamma
+
\sum_{i=1}^r K_S \log \frac{K_S \gamma \gamma_i}{K_P} \right.
\\ &+ \left. \sum_{i=r+1}^N K \log \frac{\gamma K}{\gamma \gamma_i+1}
\right] + \sum_{i=1}^r K \log \frac{K_P}{\gamma_i}
\\ &=
\lim_{\gamma \rightarrow +\infty} \left[ (r-N)K_S \log \gamma
+(N-r)K \log \gamma \right.
\\ &+ \left. \sum_{i=r+1}^N K \log \frac{K}{\gamma \gamma_i+1}
\right] + \sum_{i=1}^r K \log \frac{K_P}{\gamma_i}
+\sum_{i=1}^r K_S \log \frac{K_S \gamma_i}{K_P}
\\ &=
\lim_{\gamma \rightarrow +\infty} \left[ K_P(N-r) \log \gamma
-K\sum_{i=r+1}^{K_P} \log \left( \gamma \gamma_{i}+1 \right)
\right]
\\ &
+ \sum_{i=1}^r K \log \frac{K_P}{\gamma_i}
+\sum_{i=1}^r K_S \log \frac{K_S \gamma_i}{K_P}
+ \sum_{i=r+1}^N K \log K
=
-\infty
\end{align*}
exploiting $\gamma_i \neq 0$, $i=r+1, \ldots, K_P$
and provided that $(K_P-r)K > (N-r)K_P$ or, equivalently,
$(K_P-r)K_S > (N-r)K_P-(K_P-r)K_P=(N-K_P)K_P$.

Thus, the maximum corresponds to a stationary point. To compute the stationary points we observe that
$$
\frac{dg_1}{d\gamma}(\gamma)=
\begin{cases}
\begin{cases}
\sum_{i=1}^N \frac{K}{\gamma(\gamma \gamma_i + 1)} \\
\gamma < \frac{K_P}{K_S} \frac{1}{\gamma_1}
\end{cases} \\
\begin{cases}
\sum_{i=1}^{j-1} \frac{K_S}{\gamma}
+\sum_{i=j}^{N} \frac{K}{\gamma(\gamma\gamma_i+1)} \\
\frac{K_P}{K_S} \frac{1}{\gamma_{j-1}} \leq \gamma < \frac{K_P}{K_S} \frac{1}{\gamma_j}, j=2, \ldots, r
\end{cases} \\
\begin{cases}
\frac{rK_S}{\gamma} + \sum_{i=r+1}^N \frac{K}{\gamma(\gamma \gamma_i+1)} \\
\gamma \geq \frac{K_P}{K_S} \frac{1}{\gamma_r}
\end{cases}
\end{cases}
$$
It is easy to check that 
$$
\frac{dg}{d\gamma}(\gamma)=-\frac{NK_S}{\gamma} + \frac{dg_1}{d\gamma}(\gamma) >0, \quad \gamma < \frac{K_P}{K_S} \frac{1}{\gamma_1}.
$$
In fact, $\gamma \gamma_i < \frac{K_P}{K_S}$ implies
$$
\frac{dg_1}{d\gamma}(\gamma)=\sum_{i=1}^N \frac{K}{\gamma(\gamma \gamma_i + 1)}
> 
\frac{1}{\gamma} \sum_{i=1}^N K_S=\frac{NK_S}{\gamma}.
$$
Similarly, $\gamma \gamma_i < \frac{K_P}{K_S}$, $i=j, \ldots, N,$
implies that
$$
\sum_{i=j}^N \frac{K}{\gamma(\gamma \gamma_i + 1)}
> 
\frac{1}{\gamma} \sum_{i=j}^N K_S= \frac{1}{\gamma} K_S(N-j+1) 
$$
and eventually 
$$
\frac{dg}{d\gamma}(\gamma)>0, \quad 
\frac{K_P}{K_S} \frac{1}{\gamma_{j-1}} \leq \gamma < \frac{K_P}{K_S} \frac{1}{\gamma_j}, j=2, \ldots, r.
$$
Moreover, for $\gamma \geq \frac{K_P}{K_S} \frac{1}{\gamma_r}$, we have that 
\begin{align*}
&\frac{dg}{d\gamma}(\gamma)=-\frac{NK_S}{\gamma} + \frac{rK_S}{\gamma} + \sum_{i=r+1}^N \frac{K}{\gamma(\gamma \gamma_i+1)}
\\ &=
-\frac{(N-r)K_S}{\gamma}+ \sum_{i=r+1}^{K_P} \frac{K}{\gamma(\gamma \gamma_i+1)}+ \frac{(N-K_P)K}{\gamma}
\\ & \geq
-\frac{(N-r)K_S}{\gamma}+ \frac{K(K_P-r)}{\gamma(\gamma \gamma_r+1)}+ \frac{(N-K_P)K}{\gamma}
\end{align*}
and, in particular, $\frac{dg}{d\gamma}(\gamma)$ is non-negative at
$\gamma=\frac{K_P}{K_S} \frac{1}{\gamma_r}$.
Thus, since
\begin{align*}
&\lim_{\gamma \rightarrow + \infty} \gamma \frac{dg}{d\gamma}(\gamma)=-(N-r)K_S+(N-K_P)K
\\ &=-(N-r)K_S+(N-K_P)K_S+(N-K_P)K_P
\\ &=
K_S(r-K_P)+(N-K_P)K_P<0,
\end{align*}
it turns out that the maximum is attained at the (unique\footnote{The solution is apparently unique since the left-hand side of the equation is a stricly decreasing function of $\gamma>0$.}) solution of the equation
\begin{align*}
&\sum_{i=r+1}^{K_P} \frac{K}{(\gamma \gamma_i+1)}=- (N-K_P)K+(N-r)K_S
\\ &=(K_P-r)K_S- (N-K_P)K_P.
\end{align*}
 
\section{Proof of Theorem \ref{Theorem5}}
\label{App:ProofTheorem5}
Preliminary observe that $K_P \geq r$ implies that the rank of the matrix 
$\tilde{\bS}_S^{-1/2} \tilde{\bS}_P \tilde{\bS}_S^{-1/2}
\in\C^{r\times r}$
is $r$ and hence that $\gamma_r \neq 0$ (with probability one).
Then, define
$$
f(\gamma)=-rK_S \log \gamma+f_1(\gamma)+f_2(\gamma)
$$
where
$$
f_1(\gamma)=
K\sum_{i=1}^r\log \frac{\gamma K}{\gamma \gamma_i + \widehat{\lambda}_i(\gamma)}
+ K_S \sum_{i=1}^r\log \widehat{\lambda}_i (\gamma)
$$
and
$$
f_2(\gamma)=
-K \sum_{i=1}^{N-r} \log \left( \frac{1}{\gamma} + \delta_i \right)
-(N-r)K_S \log \gamma.
$$
Thus, supposing that $K_P \geq r$ and, hence, $\gamma_r \neq 0$, we have that
$$
f_1( \gamma)
=
\begin{cases}
\begin{cases}
K\sum_{i=1}^r\log \frac{\gamma K}{\gamma \gamma_i+1},
\\
\gamma < \frac{K_P}{K_S} \frac{1}{\gamma_{1}} 
\end{cases}
\\
\begin{cases}
K\sum_{i=1}^j\log \frac{K_P}{\gamma_i}
+ K\sum_{i=j+1}^r\log \frac{\gamma K}{\gamma \gamma_i+1}  \\
+K_S\sum_{i=1}^j\log \left( \frac{K_S}{K_P} \gamma \gamma_i \right), \\
\frac{K_P}{K_S} \frac{1}{\gamma_{j}} \leq
\gamma < \frac{K_P}{K_S} \frac{1}{\gamma_{j+1}}, 1 \leq j \leq r-1
\end{cases}
\\
\begin{cases}
K\sum_{i=1}^r\log \frac{K_P}{\gamma_i}
+K_S\sum_{i=1}^r\log \left( \frac{K_S}{K_P} \gamma \gamma_i \right), \\
\gamma \geq \frac{K_P}{K_S} \frac{1}{\gamma_{r}}
\end{cases}
\end{cases}
$$
As to $f_2(\gamma)$, it can be written as
\begin{align*}
f_2(\gamma) &=-K \sum_{i=1}^{\min(K_P,N-r)} \log \left(
\frac{1+\gamma \delta_i}{\gamma} \right)
\\ &+\left[ K \left( N-r-\min(K_P,N-r)\right) -(N-r) K_S \right] \log \gamma.
\end{align*}
Thus, it is easy to check that $\lim_{\gamma \rightarrow 0} f(\gamma)= -\infty$
and $\lim_{\gamma \rightarrow +\infty} f(\gamma)= -\infty$.
It follows that the global maximum is achieved at a stationary point.
\medskip

 Moreover,
the derivative of $f_1^{\prime}(\gamma)=-rK_S \log \gamma+f_1(\gamma)$ with respect to $\gamma$ is given by
$$
\frac{d f_1^{\prime}(\gamma)}{d\gamma}=
\begin{cases}
\begin{cases}
-\frac{rK_S}{\gamma}+\frac{K}{\gamma}\sum_{i=1}^r \frac{1}{\gamma \gamma_i+1},
\\
\gamma < \frac{K_P}{K_S} \frac{1}{\gamma_{1}}
\end{cases} \\
\begin{cases}
-\frac{(r-j)K_S}{\gamma}
+ \frac{K}{\gamma} \sum_{i=j+1}^r \frac{1}{\gamma \gamma_i+1}, \\
\frac{K_P}{K_S} \frac{1}{\gamma_{j}} \leq
\gamma < \frac{K_P}{K_S} \frac{1}{\gamma_{j+1}}, 1 \leq j \leq r-1
\end{cases}
\\
\begin{cases}
0, \\
\gamma \geq \frac{K_P}{K_S} \frac{1}{\gamma_{r}}
\end{cases}
\end{cases}
$$
while the derivative of $f_2(\gamma)$ can be written as
\begin{align*}
\frac{d f_2(\gamma)}{d\gamma} &=
\frac{1}{\gamma} \left[ K \sum_{i=1}^{\min(K_P,N-r)}  
\frac{1}{1+\gamma \delta_i} \right.
\\ &\left. +\left( K \left( N-r-\min(K_P,N-r)\right) -(N-r) K_S \right) \right].
\end{align*}
\medskip

Notice that $\frac{d f_1^{\prime}(\gamma)}{d\gamma}$ is positive if $\gamma < \frac{K_P}{K_S} \frac{1}{\gamma_r}$ and is equal to zero if $\gamma \geq \frac{K_P}{K_S} \frac{1}{\gamma_r}$.
As to the derivative of $f_2(\gamma)$, it is apparent that 
it is positive and strictly decreasing up to the unique value of $\gamma$,
say $\gamma^*$, that solves the equation
\begin{align*}
&\sum_{i=1}^{\min(K_P,N-r)}  
\frac{1}{1+\gamma \delta_i}
+ \left( N-r-\min(K_P,N-r)\right) 
\\ &-\frac{(N-r) K_S}{K} =0,
\end{align*}
while it is negative if $\gamma > \gamma^*$.
In fact, the function $\sum_{i=1}^{\min(K_P,N-r)}  
\frac{1}{1+\gamma \delta_i}$ is positive and strictly decreasing; moreover, 
$$
\lim_{\gamma \rightarrow 0} K \sum_{i=1}^{\min(K_P,N-r)}  
\frac{1}{1+\gamma \delta_i}=K \min(K_P,N-r)
$$
and
$$\min(K_P,N-r) 
+\frac{(N-r)K_P}{K}-\min(K_P,N-r) >0,
$$
but
$$
\lim_{\gamma \rightarrow +\infty} K \sum_{i=1}^{\min(K_P,N-r)}  
\frac{1}{1+\gamma \delta_i}=0
$$
and $K \left( N-r-\min(K_P,N-r)\right) -(N-r) K_S<0$
(when $K_P \leq N-r$ recall that $K_S\geq N$).

As a consequence we conclude that 
if $\gamma^* \geq \frac{K_P}{K_S} \frac{1}{\gamma_r}$
the derivative of $f(\gamma)$ has a unique zero at $\gamma^*$;  otherwise,
the stationary points of $f$ and hence its global maximum  belong to the interval $(\gamma^*, \frac{K_P}{K_S} \frac{1}{\gamma_r})$. In fact,
$$
\frac{d f}{d\gamma}(\gamma^*)=\frac{d f_1^{\prime}}{d\gamma}(\gamma^*)+\frac{d f_2}{d\gamma}(\gamma^*)>0
$$
and
$$
\frac{d f}{d\gamma}\left(\frac{K_P}{K_S} \frac{1}{\gamma_r}\right)=\frac{d f_1^{\prime}}{d\gamma}\left(\frac{K_P}{K_S} \frac{1}{\gamma_r}\right)+\frac{d f_2}{d\gamma}\left(\frac{K_P}{K_S} \frac{1}{\gamma_r}\right)<0.
$$

\bibliographystyle{IEEEtran}
\bibliography{GLRTs_v12.bib}

\begin{thebibliography}{10}
\providecommand{\url}[1]{#1}
\csname url@samestyle\endcsname
\providecommand{\newblock}{\relax}
\providecommand{\bibinfo}[2]{#2}
\providecommand{\BIBentrySTDinterwordspacing}{\spaceskip=0pt\relax}
\providecommand{\BIBentryALTinterwordstretchfactor}{4}
\providecommand{\BIBentryALTinterwordspacing}{\spaceskip=\fontdimen2\font plus
\BIBentryALTinterwordstretchfactor\fontdimen3\font minus
  \fontdimen4\font\relax}
\providecommand{\BIBforeignlanguage}[2]{{%
\expandafter\ifx\csname l@#1\endcsname\relax
\typeout{** WARNING: IEEEtran.bst: No hyphenation pattern has been}%
\typeout{** loaded for the language `#1'. Using the pattern for}%
\typeout{** the default language instead.}%
\else
\language=\csname l@#1\endcsname
\fi
#2}}
\providecommand{\BIBdecl}{\relax}
\BIBdecl

\bibitem{Kelly86}
E.~J. Kelly, ``{An adaptive detection algorithm},'' \emph{IEEE Transactions on
  Aerospace and Electronic Systems}, no.~2, pp. 115--127, 1986.

\bibitem{Kelly-Forsythe}
E.~J. Kelly and K.~Forsythe, ``{Adaptive Detection and Parameter Estimation for
  Multidimensional Signal Models},'' Lincoln Lab, MIT, Lexington, US, Technical
  Report 848, 1989.

\bibitem{Chen-Reed}
W.-S. Chen and I.~S. Reed, ``A new cfar detection test for radar,''
  \emph{Digital Signal Processing}, vol.~1, no.~4, pp. 198--214, 1991.

\bibitem{Robey}
F.~Robey, D.~Fuhrmann, E.~Kelly, and R.~Nitzberg, ``A cfar adaptive matched
  filter detector,'' \emph{IEEE Transactions on Aerospace and Electronic
  Systems}, vol.~28, no.~1, pp. 208--216, 1992.

\bibitem{CLR1995}
E.~{Conte}, M.~{Lops}, and G.~{Ricci}, ``Asymptotically optimum radar detection
  in compound-gaussian clutter,'' \emph{IEEE Transactions on Aerospace and
  Electronic Systems}, vol.~31, no.~2, pp. 617--625, 1995.

\bibitem{Scharf-McWhorter}
L.~Scharf and L.~McWhorter, ``Adaptive matched subspace detectors and adaptive
  coherence estimators,'' in \emph{Conference Record of The Thirtieth Asilomar
  Conference on Signals, Systems and Computers}, vol.~2, 1996, pp. 1114--1117.

\bibitem{Scharf-book}
L.~L. Scharf, \emph{Statistical Signal Processing: Detection, Estimation, and
  Time Series Analysis}.\hskip 1em plus 0.5em minus 0.4em\relax Addison-Wesley
  Publishing Company, 1991.

\bibitem{Scharf-Friedlander1994}
L.~L. {Scharf} and B.~{Friedlander}, ``Matched subspace detectors,'' \emph{IEEE
  Transactions on Signal Processing}, vol.~42, no.~8, pp. 2146--2157, 1994.

\bibitem{Kraut-Scharf1999}
S.~{Kraut} and L.~L. {Scharf}, ``The {CFAR} adaptive subspace detector is a
  scale-invariant {GLRT},'' \emph{IEEE Transactions on Signal Processing},
  vol.~47, no.~9, pp. 2538--2541, 1999.

\bibitem{Kraut-Scharf-McWhorter}
S.~Kraut, L.~L. Scharf, and L.~T. McWhorter, ``Adaptive subspace detectors,''
  \emph{IEEE Transactions on Signal Processing}, vol.~49, no.~1, pp. 1--16,
  January 2001.

\bibitem{Bose-Steinhardt}
S.~{Bose} and A.~O. {Steinhardt}, ``A maximal invariant framework for adaptive
  detection with structured and unstructured covariance matrices,'' \emph{IEEE
  Transactions on Signal Processing}, vol.~43, no.~9, pp. 2164--2175, 1995.

\bibitem{Gerlach1997}
K.~{Gerlach}, M.~{Steiner}, and F.~C. {Lin}, ``Detection of a spatially
  distributed target in white noise,'' \emph{IEEE Signal Processing Letters},
  vol.~4, no.~7, pp. 198--200, 1997.

\bibitem{CDMR2001}
E.~{Conte}, A.~{De Maio}, and G.~{Ricci}, ``{GLRT}-based adaptive detection
  algorithms for range-spread targets,'' \emph{IEEE Transactions on Signal
  Processing}, vol.~49, no.~7, pp. 1336--1348, 2001.

\bibitem{Gini-Farina2002}
F.~{Gini} and A.~{Farina}, ``Vector subspace detection in compound-gaussian
  clutter. {Part I}: survey and new results,'' \emph{IEEE Transactions on
  Aerospace and Electronic Systems}, vol.~38, no.~4, pp. 1295--1311, 2002.

\bibitem{BDMGR}
F.~Bandiera, A.~De~Maio, A.~S. Greco, and G.~Ricci, ``{Adaptive Radar Detection
  of Distributed Targets in Homogeneous and Partially Homogeneous Noise Plus
  Subspace Interference},'' \emph{IEEE Transactions on Signal Processing},
  vol.~55, no.~4, pp. 1223--1237, 2007.

\bibitem{BBORS2007}
F.~{Bandiera}, O.~{Besson}, D.~{Orlando}, G.~{Ricci}, and L.~L. {Scharf},
  ``{GLRT}-based direction detectors in homogeneous noise and subspace
  interference,'' \emph{IEEE Transactions on Signal Processing}, vol.~55,
  no.~6, pp. 2386--2394, 2007.

\bibitem{CDMO2016_1}
D.~{Ciuonzo}, A.~{De Maio}, and D.~{Orlando}, ``A unifying framework for
  adaptive radar detection in homogeneous plus structured interference— {Part
  I}: On the maximal invariant statistic,'' \emph{IEEE Transactions on Signal
  Processing}, vol.~64, no.~11, pp. 2894--2906, 2016.

\bibitem{CDMO2016_2}
------, ``A unifying framework for adaptive radar detection in homogeneous plus
  structured interference— {Part II}: Detectors design,'' \emph{IEEE
  Transactions on Signal Processing}, vol.~64, no.~11, pp. 2907--2919, 2016.

\bibitem{CFR2020}
A.~Coluccia, A.~Fascista, and G.~Ricci, ``A novel approach to robust radar
  detection of range-spread targets,'' \emph{Signal Processing}, vol. 166, p.
  107223, 2020.

\bibitem{Ricci-Scharf}
G.~Ricci and L.~L. Scharf, ``Adaptive radar detection of extended gaussian
  targets,'' in \emph{12nd MIT Lincoln Labs Workshop on Adaptive Sensor and
  Array Processing}, Lexington, MA (USA), March 16-18 2004.

\bibitem{BCCRV}
O.~Besson, A.~Coluccia, E.~Chaumette, G.~Ricci, and F.~Vincent, ``Generalized
  likelihood ratio test for detection of gaussian rank-one signals in gaussian
  noise with unknown statistics,'' \emph{IEEE Transactions on Signal
  Processing}, vol.~65, no.~4, pp. 1082--1092, February 15 2017.

\bibitem{Steedly-Moses}
W.~M. Steedly and R.~L. Moses, ``High resolution exponential modeling of fully
  polarized radar returns,'' \emph{IEEE Trans. on Aerospace and Electronics
  Systems}, vol.~27, no.~3, pp. 459--469, May 1991.

\bibitem{VanTrees-book1}
H.~L. Van~Trees, \emph{{Detection, Estimation, and Modulation Theory, Part I:
  Detection, Estimation, and Linear Modulation Theory}}.\hskip 1em plus 0.5em
  minus 0.4em\relax John Wiley \& Sons, 2004.

\bibitem{Jin-Friedlander}
Y.~Jin and B.~Friedlander, ``A {CFAR} adaptive subspace detector for
  second-order gaussian signals,'' \emph{IEEE Transactions on Signal
  Processing}, vol.~53, no.~3, pp. 871--884, March 2005.

\bibitem{RichardsBasicPrinciples}
M.~A. Richards, W.~A. Holm, and J.~Scheer, \emph{Principles of Modern Radar:
  Basic Principles}, ser. Electromagnetics and Radar.\hskip 1em plus 0.5em
  minus 0.4em\relax Institution of Engineering and Technology, 2010.

\bibitem{Muirhead}
R.~J. Muirhead, \emph{{Aspects of multivariate statistical theory}}.\hskip 1em
  plus 0.5em minus 0.4em\relax John Wiley \& Sons, 1982.

\bibitem{MatrixAnalysis}
R.~A. Horn and C.~R. Johnson, \emph{Matrix Analysis}.\hskip 1em plus 0.5em
  minus 0.4em\relax Cambridge University Press, 1985.

\bibitem{mirsky1959trace}
L.~Mirsky, ``On the trace of matrix products,'' \emph{Mathematische
  Nachrichten}, vol.~20, pp. 171--174, 1959.

\bibitem{Handbook_of_matrices}
H.~Lütkepohl, \emph{Handbook of Matrices}.\hskip 1em plus 0.5em minus
  0.4em\relax John Wiley \& Sons, 1996.

\bibitem{Stoica_alternating}
P.~Stoica and Y.~Selen, ``{Cyclic minimizers, majorization techniques, and the
  expectation-maximization algorithm: a refresher},'' \emph{IEEE Signal
  Processing Magazine}, vol.~21, no.~1, pp. 112--114, 2004.

\end{thebibliography}

\end{document}